\newcommand{\C}{}
\theoremstyle{plain}
\newtheorem{theorem}{Theorem}
\newtheorem{lemma}{Lemma}
\newtheorem{corollary}{Corollary} 
\newtheorem{definition}{Definition}
\newtheorem{proposition}{Proposition}
\title{Independent Set Reconfiguration Under Bounded-Hop Token Jumping
\author{ Hiroki Hatano
\and Naoki Kitamura
\and Taisuke Izumi
\and Takehiro Ito 
\and Toshimitsu Masuzawa
}
} 
\newcommand{\Isreconf}{\textsf{ISReconf}}
\newcommand{\ShortestIsreconf}{\textsf{Shortest-ISReconf}}
\newcommand{\ShortestunderkIsreconf}{\textsf{Shortest-ISReconf} under the $k$-jump rule}
\newcommand{\transition}[1]{\stackrel{#1}{\rightleftharpoons}}
\newcommand{\neighbor}[1]{\stackrel{#1}{\leftrightarrow}}
\newcommand{\dist}{\mathsf{dist}}
\newcommand{\symdiff}{\triangle}
\newcounter{cntLemmaNumber}
\newcounter{cntTheoremNumber}
\newenvironment{rtheorem}[1]{%
\setcounter{cntTheoremNumber}{\thetheorem}
\setcounterref{theorem}{#1}
\addtocounter{theorem}{-1}
\begin{theorem}
}{%
\end{theorem}
\setcounter{theorem}{\thecntTheoremNumber}
}
\newcounter{cntPropositionNumber}
\begin{document}

\date{}

\maketitle

\begin{abstract}
The independent set reconfiguration problem (\Isreconf) is the problem of determining, for two given independent sets of a graph, whether one can be transformed into the other by repeatedly applying a prescribed reconfiguration rule. 
There are two well-studied reconfiguration rules, called the Token Sliding (TS) rule and the Token Jumping (TJ) rule, and it is known that the complexity status of \Isreconf\ differs between the TS and TJ rules for some graph classes.
In this paper, we analyze how changes in reconfiguration rules affect the computational complexity of \Isreconf.
To this end, we generalize the TS and TJ rules to a unified reconfiguration rule, called the $k$-Jump rule, which removes one vertex from a current independent set and adds a vertex within distance $k$ from the removed vertex to obtain another independent set having the same cardinality. 
We give the following three results: 
First, we show that the reconfigurability of any \Isreconf\ instance does not change for all $k \ge 3$. 
Second, we present a polynomial-time algorithm to solve \Isreconf\ under the $2$-Jump rule for split graphs. 
Third, we consider the shortest variant of \Isreconf, which determines whether there is a transformation of at most $\ell$ steps, for a given integer $\ell \ge 0$. 
We prove that this shortest variant under the $k$-Jump rule is NP-complete for chordal graphs of diameter at most $2k + 1$, for any $k \geq 3$. 
\end{abstract}

\section{Introduction}
    Combinatorial reconfiguration~\cite{ItoDHPSUU11,Nishimura18,Heuvel13} has received much attention in the field of discrete algorithms and the computational complexity theory. 
    A typical \emph{reconfiguration problem} requires us to determine whether there is a step-by-step transformation between two given feasible solutions of a combinatorial (search) problem such that all intermediate solutions are also feasible and each step respects a prescribed reconfiguration rule. 
    This type of reconfiguration problems have been studied actively for several well-known feasible solutions on graphs, such as independent sets, cliques, vertex covers, colorings, matchings,~etc. (See surveys~\cite{Nishimura18,Heuvel13}.) 

    While reconfiguration problems have been considered for a wide range of feasible solutions, there are no clear rules to define reconfiguration rules; 
the smallest change to a current solution is often adopted as the reconfiguration rule unless there is a motivation from the application side. 
    To the best of our knowledge, even the most well-known reconfiguration rules, called the Token Jumping and Token Sliding rules, have no clear justification for their plausibility, which evokes the interest to more divergent reconfiguration rules beyond them.
    In this paper, we study and analyze how changes in reconfiguration rules affect the computational complexity of reconfiguration problems.

\subsection{Reconfiguration Rules and Related Known Results}

In this paper, we consider the reconfiguration problem for independent sets of a graph~\cite{KaminskiMM12,BousquetMNS22}, which is one of the most well-studied reconfiguration problems.
A subset of vertices of a graph $G = (V, E)$ is called an \emph{independent set} if no two vertices in the subset are adjacent in $G$.
In the context of reconfiguration problems,
an independent set is often interpreted to the placement of a set of \emph{tokens}, i.e., we regard an independent set 
$I \subseteq V$ as the locations of $|I|$ tokens in the graph. 
Then, one reconfiguration step of a independent set corresponds to the movement of a single token from some vertex to another vertex (on which no token is placed), such that the token locations after the movement also forms an independent set. Notice that the size of the independent set before and after the token movement remains unchanged. 
Reconfiguration rules define the allowed movements of tokens, and there are two well-used rules, called \emph{Token Sliding} and \emph{Token Jumping}~\cite{KaminskiMM12}:
\begin{itemize}
\item Token Sliding (TS) rule: a token is allowed to move only to a vertex adjacent to the current vertex; and
\item Token Jumping (TJ) rule: a token is allowed to move to an arbitrary vertex (on which no token is placed) in the graph. 
\end{itemize}

The independent set reconfiguration problem (\Isreconf) is now the problem to determine whether a given initial independent set $I_s$ of a graph $G$ can be transformed into a given target independent set $I_t$ of $G$ (having the same size as $I_s$) by moving tokens one by one under the prescribed reconfiguration rule while preserving the independence of the token placements during the transformation.  
The optimization problem, the shortest independent set reconfiguration problem (\ShortestIsreconf), of the above decision problem is also derived naturally: 
Given independent sets $I_s$ and $I_t$ ($|I_s|=|I_t|$) of a graph $G$, find the smallest number of token movements required to transform $I_s$ into $I_t$ under the prescribed reconfiguration rule, if exists.

Under both the TS and TJ rules, \Isreconf\ is known to be PSPACE-complete even for planar graphs of maximum degree three and bounded bandwidth~\cite{Zanden15}.
Therefore, algorithmic developments have been obtained for several restricted graph classes. (See the survey~\cite{BousquetMNS22} about \Isreconf.)
In particular, some known results show interesting contrasts of the complexity status between the TS and TJ rules, as follows:
For split graphs, \Isreconf\ is PSPACE-complete under the TS rule~\cite{BelmonteKLMOS21}, while it is solvable in polynomial time under the TJ rule~\cite{KaminskiMM12}.\footnote{Kami\'{n}ski et al.~\cite{KaminskiMM12} indeed gave a polynomial-time algorithm to solve \ShortestIsreconf\ under the TJ rule for even-hole-free graphs, which form a super graph class of split graphs.}
For bipartite graphs, \Isreconf\ is PSPACE-complete under the TS rule~\cite{LokshtanovM19}, while it is NP-complete under the TJ rule~\cite{LokshtanovM19}. 
The latter contrast on bipartite graphs implies that there is a yes-instance on bipartite graphs such that even a shortest transformation requires a super-polynomial number of steps under the TS rule, with the assumption of $\textup{NP} \neq \textup{PSPACE}$; on the other hand, any shortest transformation for bipartite graphs needs only a polynomial number of steps under the TJ rules. 

\subsection{Unified Reconfiguration Rule and Our Contributions}
The main purpose of our paper is to analyze how changes in reconfiguration rules affect the computational complexity of reconfiguration problems.
We regard the difference between the TS and TJ rules as ``movable distance'' of each token in one step: the TS rule allows a token to move to a vertex of distance one, while
the TJ rule allows a token to move to a vertex of distance at most $D(G)$, where $D(G)$ is the diameter of $G$.
From this perspective, we generalize the TS and TJ rules to a unified reconfiguration rule, called the \emph{$k$-Jump rule}, which allows a token to move to a vertex within distance $k$ from the current vertex, for an integer $k$, $1 \le k \le D(G)$.
Then, the TS rule is the $1$-Jump rule, and the TJ rule is the $D(G)$-Jump rule for a connected graph $G$. 

This generalization yields a natural question of what complexity landscape lies in the case of $1 < k < D(G)$, particularly for the graph classes exhibiting different computational complexity
between the TJ and TS rules. 
In this paper, we present three results addressing this question, which provides precise and interesting contrasts to the complexity status of (\textsf{Shortest}-)\Isreconf. 
Throughout this paper, let $G=(V, E)$ be an input graph, $I_s \subseteq V$ be an initial independent set, and $I_t \subseteq V$ be a target independent set.
We denote by a triple $(G, I_s, I_t)$ an instance of \Isreconf\ under the $k$-Jump rule.
We say that $(G, I_s, I_t)$ is \emph{reconfigurable}, if $I_s$ can be transformed into $I_t$ under the $k$-Jump rule. 

The first result shows that the reconfigurability of an instance $(G, I_s,I_t)$ does not change for any $k \ge 3$. 
Note that the following theorem holds for any connected graph $G$.
\begin{theorem}
\label{thm:SimulateTJ}
Let $G$ be a connected graph, and $k \geq 3$ be an arbitrary integer.
An instance $(G, I_s, I_t)$ is reconfigurable under the $k$-Jump rule if and only if $(G, I_s,I_t)$ is reconfigurable under the $D(G)$-Jump rule. 
\end{theorem}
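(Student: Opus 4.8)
The plan is to prove the two directions separately, and to reduce the nontrivial direction to simulating a single token jump by a finite sequence of $3$-jumps.

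One direction is immediate: every move legal in the $k$-Jump model replaces a vertex $v$ of the current independent set by a vertex $w$ with $\dist(v,w)\le k$, and this is in particular a legal move in the $D(G)$-Jump model, which permits replacing $v$ by \emph{any} vertex yielding an independent set. Hence a $k$-Jump transformation is verbatim a $D(G)$-Jump (TJ) transformation, so reconfigurability under $k$-Jump implies reconfigurability under TJ. For the converse I will show that the single value $k=3$ already suffices: since any move of distance at most $3$ is also a move of distance at most $k$ for every $k\ge3$, reconfigurability under the $3$-Jump model implies reconfigurability under the $k$-Jump model for all $k\ge 3$; together with the trivial direction this closes the chain $\mathrm{TJ}\Rightarrow 3\text{-Jump}\Rightarrow k\text{-Jump}\Rightarrow \mathrm{TJ}$ and yields the theorem. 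Thus it remains to prove: if $(G,I_s,I_t)$ is reconfigurable under TJ, then it is reconfigurable under the $3$-Jump model.

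It suffices to simulate one TJ move by $3$-jumps: given a TJ step transforming $I=S\cup\{a\}$ into $I'=S\cup\{b\}$ (so $S=I\cap I'$ and both $a,b$ are nonadjacent to every vertex of $S$), I exhibit a $3$-jump transformation between $I$ and $I'$; applying this to each step of a TJ transformation from $I_s$ to $I_t$ then gives a $3$-jump transformation. Fix a shortest path $a=p_0,p_1,\dots,p_d=b$; being shortest it is an induced path, and moreover each $s\in S$ is adjacent to at most three consecutive path vertices (otherwise two neighbours $p_i,p_j$ of $s$ with $j\ge i+3$ would give a shortcut $p_i,s,p_j$, contradicting minimality). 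I argue by strong induction on $d=\dist(a,b)$. If $d\le 3$ then $a\to b$ is a single legal $3$-jump, because $b$ is unoccupied and nonadjacent to $S$. If $d\ge 4$, the goal is to reach an independent set $S\cup\{a'\}$ from $S\cup\{a\}$ by $3$-jumps with $\dist(a',b)<d$, and then invoke induction. When some $p_j$ with $1\le j\le 3$ is \emph{free} (not in $S$ and nonadjacent to $S$) I simply jump $a\to p_j$. The essential case is when $p_1,p_2,p_3$ are all blocked by tokens of $S$; here I use that tokens are indistinguishable and perform a \emph{relay}: the blocking obstacles are jumped forward along the path into free positions (the endpoint $b$ is free and provides room at the front, and further space opens progressively as obstacles advance), the token at $a$ is advanced into the vacated region, and finally the displaced tokens are jumped back onto the original positions of $S$. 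Because each obstacle meets only a window of three consecutive path vertices, and a $3$-jump can step over such a window, this realizes a net \emph{cyclic} shift of tokens whose only effect on the set is to replace $a$ by a vertex strictly closer to $b$.

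The main obstacle is making the relay in the blocked case fully rigorous when several obstacles overlap and the blocked stretch is long: I must verify that at each stage there is a legal relocation of distance at most $3$ keeping every intermediate set independent, and that the displaced tokens can be resettled onto $S$ exactly. This is precisely where the threshold $k=3$ enters, since stepping a token across an obstacle's three-vertex window (e.g.\ from $p_i$ to $p_{i+4}$ through the obstacle's vacated neighbourhood) requires jumps of distance $3$ and fails at distance $2$. I expect to organize this as an inner induction that, reading the obstacles from the one nearest $b$ toward $a$, shifts each one window forward in turn, so that the whole construction is a single cyclic relay along the induced path; this also makes each simulated TJ move a finite (indeed polynomially bounded) sequence of $3$-jumps.
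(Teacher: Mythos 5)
Your overall architecture is sound and matches the paper's: the $k$-Jump $\Rightarrow$ TJ direction is trivial, the chain $\mathrm{TJ}\Rightarrow 3\text{-Jump}\Rightarrow k\text{-Jump}$ reduces everything to simulating one token move by induction on the distance $d$ along a fixed shortest path, and your free-vertex case is fine. The gap is exactly where you flag it: the ``essential case'' relay is not a proof, and as sketched it has concrete defects. First, a single $3$-jump from $p_i$ to $p_{i+4}$ is impossible no matter what has been vacated: since $p_0,\dots,p_d$ is a shortest path, $\dist_G(p_i,p_{i+4})=4$ exactly, and distance is a property of the graph, not of token occupancy; stepping over an obstacle's window therefore needs two jumps landing on the vacated obstacle vertex, each of whose legality (the landing vertex being unblocked at that moment) must be verified. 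Second, the ``return'' half of your cyclic relay can be illegal: once the token from $a$ has advanced to some $a'$ on the path, any displaced obstacle whose three-vertex window contains a neighbor of $a'$ cannot be resettled onto its original vertex of $S$, so the final set need not be $S\cup\{a'\}$ and the outer induction does not apply. Third, the claim that ``space opens progressively'' toward $b$ is unproved: $b$ itself is free, but $p_{d-1},p_{d-2},\dots$ may be blocked by other tokens of $S$, so the forward shifts may have no legal landing spots without further argument.

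The paper's proof (Lemma~\ref{lemma8}) shows the relay is unnecessary, by working at the \emph{target} end of the path rather than the source end. For $k=3$, consider $p_{d-2}$ (in general $u_{\ell-k+1}$). If it is empty and unblocked, route the token through it by induction. Otherwise some token $t'$ sits on a vertex $u'\in\{p_{d-2}\}\cup N_G(p_{d-2})$, hence $\dist_G(u',b)\leq 3$: move $t'$ directly onto the destination $b$ in a single jump (legal, since $b$ is unblocked by the independence of $S\cup\{b\}$, and $u'$ is nonadjacent to $b$ for the same reason), and then move the original token from $a$ to the vacated $u'$, which is unblocked and at distance at most $d-k+2<d$, by the induction hypothesis. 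This is where $k\geq 3$ enters: the strict decrease $d-k+2<d$ fails for $k\leq 2$. Indistinguishability is used exactly once --- the blocker is sent to where some token must end up anyway, so it never returns --- and all multi-token bookkeeping disappears. If you replace your relay by this two-move swap, the rest of your outline goes through essentially verbatim.
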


    \begin{figure}[t]
        \centering
	\includegraphics[width=0.45\linewidth]{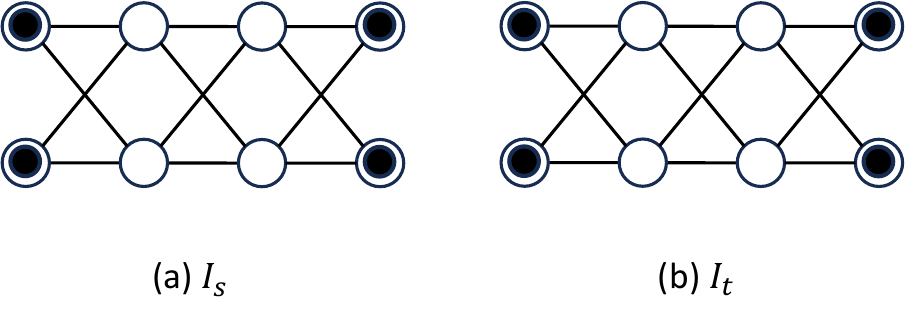}\\
	\caption{
		No-instance for split graphs under the $2$-Jump rule. Note that this is a yes-instance under the $k$-Jump rule, $k \ge 3$.
	}
	\label{fig:split}
    \end{figure}
    
While Theorem~\ref{thm:SimulateTJ} shows that the reconfigurability of an instance does not change for all $k \ge 3$, it may differ between $k \le 2$ and $k \ge 3$. 
For example, see \figurename~\ref{fig:split}, where $G$ is a split graph. 
For split graphs, any instance with $|I_s| = |I_t|$ is reconfigurable under the $k$-Jump rule, $k\ge 3$~\cite{KaminskiMM12}.
On the other hand, as we have seen in the example in \figurename~\ref{fig:split}, there exist instances for split graphs which are not reconfigurable under the $2$-Jump rule. 
Nonetheless, we give the following theorem, as our second result.
\begin{theorem}
\label{theorem2}
There exists a polynomial-time algorithm for \Isreconf\ under the $2$-Jump rule for split graphs.
\end{theorem}
Recall that \Isreconf\ under the $1$-Jump (i.e., TS) rule is PSPACE-complete for split graphs~\cite{BelmonteKLMOS21}. 
Thus, the complexity status of \Isreconf\ differs between $k=1$ and $k=2$.

Theorem~\ref{thm:SimulateTJ} says that the complexity status of \Isreconf\ is identical for all $k \ge 3$. 
Our third result shows that this does not hold for the optimization variant, \ShortestIsreconf. 
We note that \ShortestIsreconf\ under the $D(G)$-Jump rule is solvable in polynomial time for even-hole-free graphs~\cite{KaminskiMM12}, which include chordal graphs. 
\begin{theorem}
\label{theorem3}
Let $k \geq 3$ be any integer. 
Then, there exists a graph class $\mathcal{G}_k$ such that $\mathcal{G}_k$ is a subclass of chordal graphs of diameter at most $2k + 1$ and \ShortestIsreconf\ under the $k$-Jump rule is NP-complete for $\mathcal{G}_k$.
\end{theorem}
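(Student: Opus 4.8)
The plan is to prove the statement in two parts: membership in NP, and NP-hardness via a reduction from a suitable NP-complete minimization problem.

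For membership, I would first invoke Theorem~\ref{thm:SimulateTJ}: since every graph in $\mathcal{G}_k$ is connected and chordal, an instance is reconfigurable under the $k$-Jump model if and only if it is reconfigurable under the TJ model. Because chordal graphs are even-hole-free, the polynomial-time algorithm of Kami\'nski et al.~\cite{KaminskiMM12} decides TJ-reconfigurability and, whenever the instance is reconfigurable, certifies that a shortest TJ transformation has polynomially bounded length. I would then use the fact that the (constructive) simulation behind Theorem~\ref{thm:SimulateTJ} converts such a transformation into a $k$-Jump transformation with only a polynomial blow-up in length, which is plausible here because $D(G) \le 2(k+1)$ is constant for fixed $k$, so each single jump of distance at most $D(G)$ is re-routed over a bounded distance. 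This yields an a priori polynomial upper bound $B$ on the length of a shortest $k$-Jump transformation, and hence the decision version ``is the minimum number of moves at most $\ell$?'' lies in NP: if $\ell \ge B$ the answer is determined in polynomial time by reconfigurability, and if $\ell < B$ we may guess a transformation of at most $\ell < B$ moves and verify it. The reliance on a polynomial-length simulation is the one point I would make fully rigorous here.

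For hardness I would design $\mathcal{G}_k$ as the image of a polynomial reduction from a problem whose decision version is NP-complete; a covering problem such as Vertex Cover is a natural source, since its optimum is a count that can be matched against a move budget. The skeleton of each output graph is a central clique to which I attach several chordal \emph{arms}, each a short chain of cliques whose length is chosen so that two arm-tips lie at distance close to $2(k+1)$; this simultaneously keeps $G$ chordal and pins its diameter to the prescribed value. Tokens stationed at arm-tips in $I_s$ must be routed to target tips in $I_t$, and because the tip-to-tip distance exceeds $k$, each such token needs several $k$-Jump hops through \emph{relay} vertices inside the arms. I would encode the ground elements (vertices) of the source instance by these relays: opening a relay (placing and later removing an auxiliary token) costs a fixed number of moves, a routing token can cross an arm cheaply only through opened relays, and the covering constraints are arranged so that every routing token has a cheap route precisely when the set of opened relays is a feasible cover.

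Completeness is the easy direction: given an optimal cover I would open the corresponding relays, route every token along a cheap path, and bound the total number of moves by a fixed baseline plus a term proportional to the cover size, thereby meeting a threshold $\ell$ set to baseline-plus-target. Soundness is the reverse: from any $k$-Jump transformation meeting $\ell$ I would read off which relays were opened and show this set must be a feasible cover of size at most the target, since otherwise some routing token could not reach its destination within its move allowance; this makes ``minimum number of moves $\le \ell$'' equivalent to ``the source instance has a solution of size $\le$ target.'' The hard part will be the soundness lower bound together with the structural constraints: unlike the TJ setting, a shortest $k$-Jump transformation may move tokens in interleaved ways (tokens may backtrack, and relay tokens may be reused across several routings), so I must prove a clean move lower bound that isolates the cover cost from bookkeeping moves, which is exactly where the geometry of the arms and the choice of relay costs matter. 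Simultaneously, every gadget must stay chordal and the diameter must remain at most $2(k+1)$ for all $k \ge 3$, tightly constraining how the source constraints can be represented; reconciling ``expressive enough to encode a cover'' with ``chordal and of bounded diameter'' is the main technical obstacle, and I expect it to dictate the precise arm lengths and relay-gadget design.
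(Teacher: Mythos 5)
Your NP-membership argument is essentially the paper's own: invoke Theorem~\ref{thm:SimulateTJ}, use the result of Kami\'nski et al.~\cite{KaminskiMM12} that shortest TJ sequences in even-hole-free graphs (hence in chordal graphs) have polynomial length, and observe that the simulation behind Lemma~\ref{lemma8} incurs only a polynomial blow-up per TJ move. That half is sound, and the one point you flag for rigor (the polynomial-length simulation) is indeed the same point the paper relies on.

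The hardness half, however, is a plan rather than a proof, and the gap is exactly where you locate it yourself. You never instantiate the Vertex-Cover gadgets: there is no concrete definition of an arm, a relay, or the ``opening'' cost, no specification of $I_s$, $I_t$, or the threshold $\ell$, and --- critically --- no soundness argument. Your budget of ``baseline plus cover size'' builds slack into the move count, and slack is precisely what makes a move-count lower bound hard: with spare moves, tokens can backtrack, interleave, and reuse relays, and you have no mechanism to charge each such move against the cover. The paper's reduction (from E3-SAT, not Vertex Cover) avoids this entirely by making the budget $2(m+n)$ \emph{equal} to the trivial lower bound: each clause token sits at distance at least $k+1$ from every target vertex and so must move at least twice, and each variable token must move at least once, so within the budget every token moves exactly its minimum number of times (statement (S1) of Lemma~\ref{lemma15}). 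This rigidity immediately pins every token's destination ((S2), (S3)), rules out cross-gadget moves, and reduces soundness to a local observation: a clause token can cross its length-$2k$ path in two jumps only through one of the three clique vertices $k^i_0,k^i_1,k^i_2$, and such a vertex is unblocked exactly when the corresponding variable gadget has been opened in the matching polarity, yielding a satisfying assignment (Lemma~\ref{lemma17}). The central clique $K$ also disposes of your diameter concern for free, giving diameter $2k+1$ without any tuning of arm lengths. Unless you can redesign your reduction so that the threshold coincides with an unconditional lower bound on the number of moves (thereby eliminating slack), the soundness step you deferred is not a routine verification but the entire difficulty, and the proposal as written does not establish NP-hardness.
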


Tables~\ref{table:ISReconf} and~\ref{table:Shortest-ISReconf} summarize our results and the comparison with known results. 
Recall that split graphs form a subclass of chordal graphs and chordal graphs form a subclass of even-hole-free graphs.
We here give a remarks to Table~\ref{table:Shortest-ISReconf}. 
The complexity of {\ShortestIsreconf} under the $k$-Jump rule ($k \geq 3$) for split graphs is trivially obtained from the complexity of the problem under the $D(G)$-Jump rule for split graphs, because the diameter of any split graph is at most three.
\begin{table}[tbp]
    \centering
    \caption{Computational complexity of $\Isreconf$ for connected graphs $G$, where the term ``PSPACE-c'' is the abbreviation of PSPACE-complete.}
    \label{table:ISReconf}
    \medskip
    \begin{tabular}{c||c|c|c}
        \hline
        Reconfiguration Rules & Even-hole-free & Chordal & Split\\
        \hline
        TS ($1$-Jump) & PSPACE-c~\cite{BelmonteKLMOS21} & PSPACE-c~\cite{BelmonteKLMOS21} & PSPACE-c~\cite{BelmonteKLMOS21}\\
        $2$-Jump & open & open & P (Thm.~\ref{theorem2}) \\
        $k$-Jump ($k \geq 3$) & P (Thm.~\ref{thm:SimulateTJ}) & P (Thm.~\ref{thm:SimulateTJ}) & P (Thm.~\ref{thm:SimulateTJ}) \\
        TJ ($D(G)$-Jump) & P~\cite{KaminskiMM12} & P~\cite{KaminskiMM12} & P~\cite{KaminskiMM12} \\
        \hline
    \end{tabular}
\end{table}

\begin{table}[tbp]
    \centering
    \caption{Computational complexity of $\ShortestIsreconf$ for connected graphs $G$, where the term ``NP-c'' means NP-complete.}
    \label{table:Shortest-ISReconf}
    \medskip
    \begin{tabular}{c||c|c|c}
        \hline
        Reconfiguration Rules & Even-hole-free & Chordal & Split\\
        \hline
        TS (1-Jump) & PSPACE-c~\cite{BelmonteKLMOS21} & PSPACE-c~\cite{BelmonteKLMOS21} & PSPACE-c~\cite{BelmonteKLMOS21}\\
        \begin{tabular}{c}
        $k$-Jump\\
        ($3\leq k \leq \frac{D(G)-1}{2} $)
        \end{tabular}& NP-c (Thm.~\ref{theorem3}) & NP-c (Thm.~\ref{theorem3})& P (trivial) \\
        TJ ($D(G)$-Jump) & P~\cite{KaminskiMM12} & P~\cite{KaminskiMM12} & P~\cite{KaminskiMM12} \\
        \hline
    \end{tabular}
\end{table}

\subsection{Roadmap}
In Section~\ref{sec:preliminaries}, we introduce fundamental definitions, terminologies, and notations.
Sections~\ref{sec:Simulate}, \ref{sec:2-ISReconf}, and \ref{sec:NP-completeness} respectively provide 
the results of Theorems~\ref{thm:SimulateTJ}, \ref{theorem2}, and \ref{theorem3}. Finally, we conclude
the paper in Section~\ref{sec:conclusion} with several open problems related to our new rule.

\section{Preliminaries}
\label{sec:preliminaries}
In this section, we explain the notation used in this paper.
For sets $X$ and $Y$, their symmetric difference is defined as $X \symdiff Y = (X \cup Y) \setminus (X \cap Y)$. 

We consider only undirected graphs that are simple and connected\footnote{We assume graphs are connected for simplicity although the proposed algorithm works without the assumption.}.
For a graph $G$, we sometimes denote by $V(G)$ and $E(G)$ the vertex set and the edge set of $G$, respectively. 
The set of the vertices adjacent to $v$ in $G$ is denoted by $N_G(v)$, that is,  $N_G(v)=\{w\in V(G) \mid \{v,w\} \in E(G)\}$.
Let $\dist_G(u,v)$ denote the distance  in $G$ between vertices $u, v \in V(G)$.
A subset $S\subseteq V(G)$ is called an \emph{independent set of $G$} if no two vertices in $S$ are adjacent in $G$.
Let $\mathcal{I}(G)$ be the set of all independent sets of $G$.

Let $k$ be a positive integer. 
For two independent sets $I, I' \in \mathcal{I}(G)$, we write $I \neighbor{k} I'$ if and only if $|I \setminus I'| = |I' \setminus I| = 1$ and $\dist_G(u, v) \leq k$ for $u \in I \setminus I'$, $v \in I' \setminus I$. 
Let $\transition{k}$ be the transitive closure of $\neighbor{k}$.
Note that, from the definitions, $\neighbor{k}$ and $\transition{k}$ satisfy the symmetry.
For two independent sets $I, J \in \mathcal{I}(G)$, a sequence $\langle I_0,I_1, \ldots, I_{\ell} \rangle$ of independent sets of $G$ is called a \emph{reconfiguration sequence} between $I$ and $J$ (under the $k$-Jump rule) if $I_0 = I$, $I_\ell = J$, and $I_i \neighbor{k} I_{i+1}$ for all $i$, $0 \leq i < \ell$.
The \emph{length} of a reconfiguration sequence $\langle I_0,I_1, \ldots, I_{\ell} \rangle$ is defined to be $\ell$. 
We now define the independent set reconfiguration problem under the $k$-Jump rule ({$k$-\Isreconf}) and its optimization variant, as follows.
\begin{definition}
Given a graph $G$ and independent sets $I_s, I_t \in \mathcal{I}(G)$, the problem {$k$-\Isreconf} asks to determine whether $I_s \transition{k} I_t$ or not.
\end{definition}

\begin{definition}
Given a graph $G$, an integer $\ell'$ and independent sets $I_s, I_t \in \mathcal{I}(G)$, the problem {\ShortestunderkIsreconf} asks to determine whether $I_s \transition{k} I_t$ and the minimum length of any reconfiguration sequence between $I_s$ and $I_t$ is less than or equal to $\ell'$ or not.
\end{definition}

In the following, we use \emph{tokens} to represent the vertices in an independent set $I \in \mathcal{I}(G)$; 
imagine that a token is placed on each vertex in $I$.  
If an independent set $I'$ of $G$ can be obtained from another independent set $I$ such that $I' =(I \setminus \{u\}) \cup \{v\}$ ($u \in I\setminus I', v \in I'\setminus I$), we say that the token on $u$ \emph{moves} to $v$.
To specify the token on a vertex $u \in I$, we often say the token $u$.
For an independent set $I \in \mathcal{I}(G)$, we say a vertex $v \in V(G)$ is \emph{blocked} (by $I$) when $N_G(v) \cap I \neq \emptyset$.

\section{Equivalence of the $k$-Jump Rule ($k \geq 3$)}
\label{sec:Simulate}
The goal of this section is to prove Theorem~\ref{thm:SimulateTJ}.
\begin{rtheorem}{thm:SimulateTJ}
Let $G$ be a connected graph, and $k \geq 3$ be an arbitrary integer.
An instance $(G, I_s, I_t)$ is reconfigurable under the $k$-Jump rule if and only if $(G, I_s,I_t)$ is reconfigurable under the $D(G)$-Jump rule. 
\end{rtheorem}
When independent sets $I_1$ and $I_2$ of $G$ satisfy $I_1 \transition{k} I_2$, they also satisfy $I_1 \transition{k'} I_2$ for any $k'>k$.
The converse also holds as the following lemma shows.

\begin{lemma}
Let $k' > k \geq 3$.
If $I_1 \transition{k'} I_2$ holds for independent sets $I_1$, $I_2$ of $G$, then $I_1 \transition{k} I_2$ holds.
\label{lemma8}
\end{lemma}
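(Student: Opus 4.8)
The plan is to prove the nontrivial inclusion directly, reducing it to a single step: it suffices to show $\neighbor{k+1}\;\subseteq\;\transition{k}$ for every $k\ge 3$. Since $\transition{k}$ is transitively closed, this containment immediately yields $\transition{k+1}\subseteq\transition{k}$, and telescoping gives $\transition{k'}\subseteq\transition{k'-1}\subseteq\cdots\subseteq\transition{k}$ for all $k'>k\ge 3$ (every intermediate index stays $\ge 3$). Combined with the trivial inclusion $\transition{k}\subseteq\transition{k'}$ already noted, this proves the lemma, so the whole argument concentrates on simulating one long jump by short ones.

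So fix a single move $I\neighbor{k+1}I'$ in which the token on $u$ moves to $v$, $I'=(I\setminus\{u\})\cup\{v\}$, and write $S=I\setminus\{u\}$ for the stationary tokens. If $\dist_G(u,v)\le k$ we are done in one jump, so assume $\dist_G(u,v)=k+1$ and fix a shortest path $u=p_0,p_1,\ldots,p_{k+1}=v$. Because $I'$ is independent, $v$ is not blocked by $S$. First I would dispose of the easy case, where some internal vertex $p_i$ ($1\le i\le k$) is free with respect to $S$ (unoccupied and unblocked): then $\dist_G(u,p_i)=i\le k$ and $\dist_G(p_i,v)=k+1-i\le k$, so the two jumps $u\to p_i\to v$ realize the move.

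The hard part is when every internal vertex is blocked or occupied, so the vertices available to the moving token may be separated from $v$ and the token cannot reach $v$ on its own; here one must temporarily displace a stationary token. Since $p_k\sim v$ and $v$ is unblocked, $p_k\notin S$, so some $x\in S$ blocks $p_k$, whence $\dist_G(x,v)=2$ through the common neighbor $p_k$. The maneuver I would use is to first jump $x\to v$ (length $2$); the resulting set is independent because $v$ is unblocked, and---crucially, since tokens are indistinguishable---the leftover task from $C_1=\{u,v\}\cup(S\setminus\{x\})$ to $I'$ is exactly the single move $u\to x$. As $x\sim p_k$ we have $\dist_G(u,x)\le k+1$, so if $\dist_G(u,x)\le k$ a final jump $u\to x$ finishes the simulation.

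The genuine obstacle is the boundary subcase $\dist_G(u,x)=k+1$, which is precisely where the analogous argument breaks for $k=2$ and underlies the split-graph counterexample of \figurename~\ref{fig:split}. I would resolve it with a second displacement: in $C_1$ the vertex $p_{k-1}$ is blocked or occupied by some $y\in S\setminus\{x\}$ with $y\sim p_{k-1}$ (it cannot be blocked by $u$ or $v$, since $\dist_G(p_{k-1},u)=k-1\ge 2$ and $\dist_G(p_{k-1},v)=2$). The key point is that $\dist_G(y,x)\le 3$ along $y\,p_{k-1}\,p_k\,x$, so the jump $y\to x$ has length at most $3\le k$ \emph{exactly} when $k\ge 3$; after it the remaining task is the single move $u\to y$ with $\dist_G(u,y)\le\dist_G(u,p_{k-1})+1=k$. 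I expect the main work to lie in the bookkeeping: verifying that each intermediate set stays independent and that the distance relations $\dist_G(u,x)\in\{k-1,k,k+1\}$ and $\dist_G(y,x)\le 3$ hold in all configurations. Packaging the displacements as an induction (for instance on $\dist_G(u,v)$ together with the number of blocked internal vertices) should guarantee termination and absorb the remaining subcases, such as when $p_{k-1}$ is itself occupied rather than merely blocked.
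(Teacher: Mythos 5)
Your proposal is correct, but it takes a genuinely different route from the paper's proof. The paper fixes the target radius $k$ and argues by induction on the distance $\ell$ travelled by the single moving token along a shortest path $u_0,\dots,u_\ell$: if $u_{\ell-k+1}$ is free, the move factors through it; otherwise a token $t'$ on some $u'\in\{u_{\ell-k+1}\}\cup N_G(u_{\ell-k+1})$ jumps directly to $u_\ell$ (distance $\leq 1+(k-1)=k$) and the original token then moves to the vacated $u'$ by the induction hypothesis, using $\dist_G(u_0,u')\leq \ell-k+2<\ell$ --- which is exactly where $k\geq 3$ enters for the paper. You instead descend in the radius parameter, proving $\neighbor{k+1}\subseteq\transition{k}$ and telescoping $\transition{k'}\subseteq\cdots\subseteq\transition{k}$; your single-step simulation displaces at most two stationary tokens ($x$ adjacent to $p_k$, then $y$ at or adjacent to $p_{k-1}$) and uses at most three $k$-jumps, with $k\geq 3$ entering via $\dist_G(y,x)\leq 3$. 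Your distance and independence checks all go through, and the telescoping is valid since every intermediate radius stays $\geq 3$. Two remarks. First, a small unnoticed subcase: your assertion that in $C_1$ the vertex $p_{k-1}$ is blocked or occupied by some $y\in S\setminus\{x\}$ can fail when $x$ itself was the \emph{unique} blocker of $p_{k-1}$ in $S$; but then $p_{k-1}$ is free in $C_1$, and the two jumps $u\to p_{k-1}\to x$ (lengths $k-1$ and $\leq 2$) finish, so the patch is immediate and in the spirit of your own ``bookkeeping'' caveat. Second, a quantitative difference worth knowing: the paper's recursion spends $O(\ell)$ $k$-jumps to simulate one long jump, whereas your telescoping multiplies the number of moves by up to $3$ per radius decrement, i.e.\ up to $3^{k'-k}$, which can be exponential when $k'=D(G)$. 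This is harmless for Lemma~\ref{lemma8} itself, which is purely qualitative, but the paper later reuses the polynomial-length simulation from the proof of Theorem~\ref{thm:SimulateTJ} in the NP-membership part of Theorem~\ref{theorem3}, and your version would not support that step without modification.
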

\begin{proof}

Without loss of generality, we consider only the case of $I_1 \neighbor{k'} I_2$ (that is, only a single token moves in the transition from $I_1$ to $I_2$).
Let $t$ be the token which moves from vertex $u_0$ to $u_h$ in transition from $I_1$ to $I_2$, and $P = u_0, u_1, \dots, u_h$ be a shortest path from $u_0$ to $u_h$.
The proof is by induction on $h$.
(Basis) $h \leq k$: the lemma obviously holds.
(Inductive Step) Assuming that the lemma holds for any $h' < h$, consider the case of $h$.
When $u_{h - k+1}$ is not blocked and has no token, then $t$ can move to $u_\ell$ via $u_{h - k+1}$ by induction assumption (Fig.~\ref{fig1}(a)) since $\dist_{G}(u_0, u_{h - k + 1}) \leq h'$ and $\dist_G(u_{h - k+1}, u_h) \leq k$ hold. 
Thus, $I_1 \transition{k} I_2$ holds.
When $u_{h - k + 1}$ has a token or is blocked, then a vertex $u' \in \{u_{h - k + 1}\} \cup N_G(u_{h - k + 1})$ has a token, say $t'$.
From $\dist_G(u_0, u') < h$ and $\dist_G(u', u_h) \leq k$, $t'$ can move to $u_h$ and then $t$ can move to $u'$ by induction assumption (Fig.~\ref{fig1}(b) and \ref{fig1}(c)). Thus $I_1 \transition{k} I_2$ holds.
\qed
\end{proof}

\begin{figure}[t]
  \begin{minipage}[b]{0.32\linewidth}
    \centering
    \includegraphics[keepaspectratio, scale=0.5, width =\linewidth]{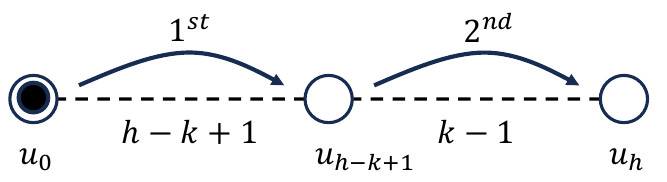}
    \subcaption[a]{}
  \end{minipage}
  \begin{minipage}[b]{0.32\linewidth}
    \centering
    \includegraphics[keepaspectratio, scale=0.5, width =\linewidth]{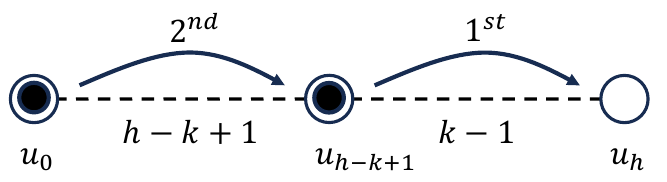}
    \subcaption[a]{}
  \end{minipage}
    \begin{minipage}[b]{0.32\linewidth}
    \centering
    \includegraphics[keepaspectratio, scale=0.5, width =\linewidth]{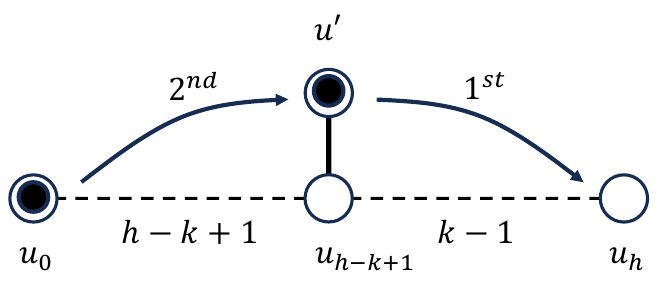}
    \subcaption[a]{}
  \end{minipage}
  \caption{
 Illustration for proof of Lemma \ref{lemma8}.
 Tokens move along the arrows in the described order.
  (a) Case where $u_{h - k + 1}$ is not blocked and has no token, (b) Case where $u_{h - k + 1}$ has a token.  (c) Case where $u_{h - k + 1}$ is blocked.
  }
    \label{fig1}
\end{figure}

\section{Algorithm for $2$-ISReconf on Split Graphs}
\label{sec:2-ISReconf}
In this section, we give a polynomial-time algorithm to solve {$2$-\Isreconf} for split graphs, and prove Theorem~\ref{theorem2}.
\begin{rtheorem}{theorem2}
There exists a polynomial-time algorithm for \Isreconf\ under the $2$-Jump rule for split graphs.
\end{rtheorem}
\subsection{Split Graphs and Fundamental Properties}
In this section, we present fundamental properties use in the algorithm.
A \emph{split graph} is a graph such that its vertex set can be partitioned into a clique and an independent set. 
Then, a split graph $G$ can be seen as the graph consisting of a complete graph $G^A = (V^A, E^A)$ and a bipartite graph $G^B=(V^B, U^B, E^B)$ such that $V(G) = V^A \cup U^B$, $V^B \subseteq V^A$, $U^B \cap V^A = \emptyset$, and $E(G) = E^A \cup E^B$. (See \figurename~\ref{fig2}).
Since one can identify the vertex sets $V^A$ and $U^B$ for a given split
graph $G$ in polynomial time, the following argument assumes that the information on those sets are available.
For simplicity, we assume that no isolated vertex exists in $G^B$.
Each connected component of the bipartite Graph $G^B$ is called a \emph{cluster} .
In the following, let the cluster set of the split graph $G$ be $\mathcal{C} = \{C_0,C_1, \dots, C_{m-1}\}$ with $C_i = (U^{\C}_{i}, V^{\C}_{i}, E^{\C}_{i})$ ($U^{\C}_i \subseteq U^B$, $V^{\C}_i \subseteq V^B$, and $E_i$ is the set of edges induced from $E^B$).
For convenience, when there exists a vertex set $V' \subseteq V^A$ not having any neighbor in $U^B$, we treat $(\emptyset, V', \emptyset)$ as a bipartite graph cluster and is included in $\mathcal{C}$.

\begin{figure}[t]
  \begin{minipage}[b]{0.3\linewidth}
    \centering
    \includegraphics[keepaspectratio, scale=0.25]{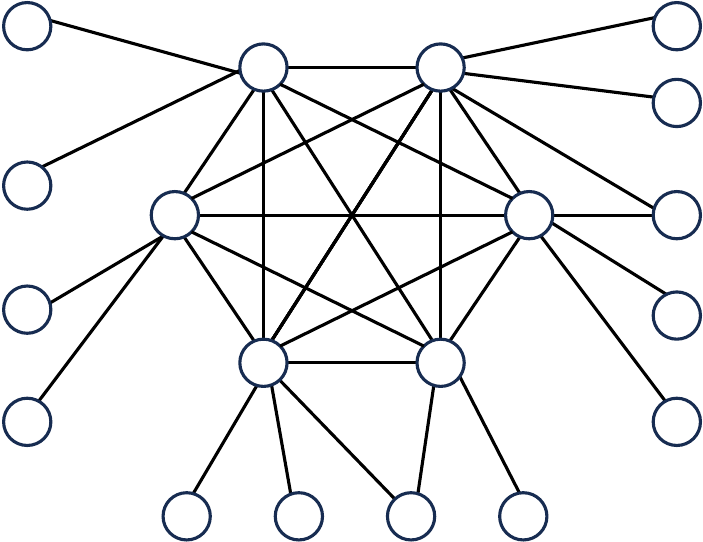}
    \subcaption[a]{}
  \end{minipage}
  \begin{minipage}[b]{0.3\linewidth}
    \centering
    \includegraphics[keepaspectratio, scale=0.25]{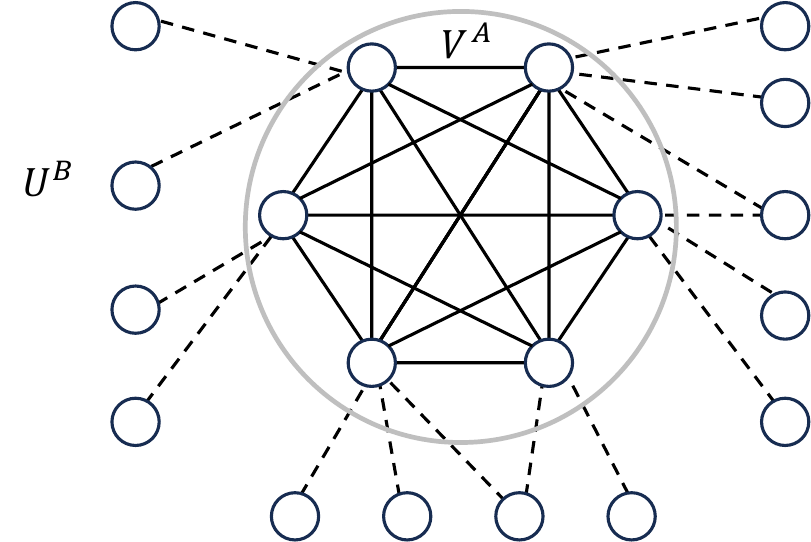}
    \subcaption[b]{}
  \end{minipage}
    \begin{minipage}[b]{0.3\linewidth}
    \centering
    \includegraphics[keepaspectratio, scale=0.25]{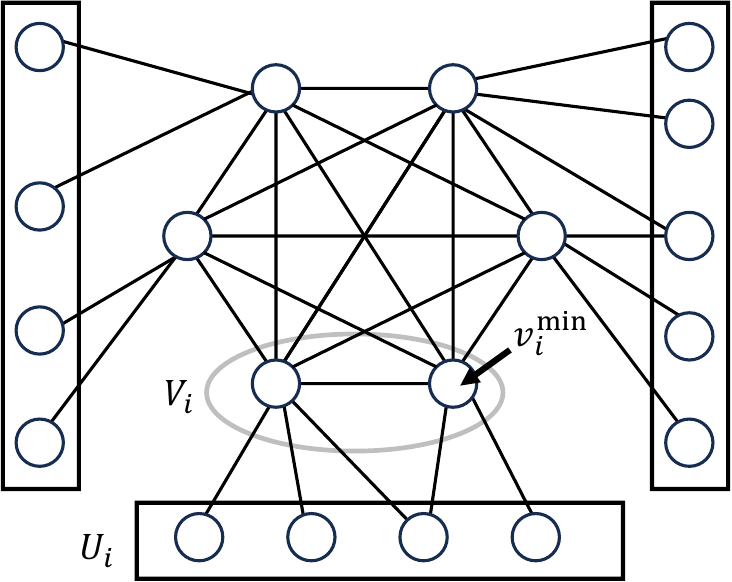}
    \subcaption[c]{}
  \end{minipage}
  \caption{
  (a) An example of a split graph.
  This graph consists of a complete subgraph composed of the solid lines in (b) and a bipartite graph composed of the solid lines in (c).
  The vertices in the complete subgraph in (b) constitute $V^A$, and the vertices in the bipartite graph contained in the squares in (c) constitute $U^B$.
  Each cluster $C_i$ is represented by a rectangle containing $U_i$ and its neighbor vertices.
  }
    \label{fig2}
\end{figure}

In the following, we exclude the obvious case of $|I_s| = |I_t| > |U^B|$.  In this case, the size of the maximum independent set of $G$ is at most $|U^B|+1$: tokens are placed on all vertices in $U^B$ in both $I_s$ and $I_t$ and another token is placed on a node in $V^A$ that is possibly different in $I_s$ and $I_t$.  Thus, $I_s \transition{2} I_t$ always holds.
In the case of $|I_s|=|I_t| \leq |U^B|$, we assume that $I_s$ and $I_t$ contain no vertex in $V^A$, which does not lose generality from the following reason.
Note that the diameter of the split graph is at most 3, and the distance from the vertex in $V^A$ to the vertex in $U^B$ is always at most 2.
When $I_s$ contains a node $v$ in $V^A$, we can obtain an independent set $I'_s$ by moving the token on $v$ to an arbitrary empty vertex in $U^B$. Similarly, we can obtain $I'_t$ from $I_t$.  It is clear that $I_s \transition{2} I_t$ holds if and only if $I'_s \transition{2} I'_t$ holds.
In the following, an independent set $I$ such that $I \cap V^A = \emptyset$ is called a \emph{typical} independent set.

\subsection{Token Distribution}

In this section, we state a lemma and its corollary. 
The following lemma shows that tokens can be freely moved within each cluster (independent of token placement outside the cluster).
\begin{lemma}
\label{lma:innercluster}
Let $I_1, I_2$ ($I_1 \neq I_2$) be typical independent sets of G with the same size (i.e., $|I_1|=|I_2|$) such that their token placements are different only in some cluster $C_i=(U^{\C}_i, V^{\C}_i, E^{\C}_i)$, that is, $|I_1 \cap U^{\C}_i| = |I_2 \cap U^{\C}_i|$ and $I_1 \setminus U^{\C}_i = I_2 \setminus U^{\C}_i$ are satisfied.
Then, $I_1 \transition{2} I_2$ holds.
\end{lemma}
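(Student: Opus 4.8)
The plan is to reduce the within-cluster reconfiguration to an unlabeled token-sliding problem on an auxiliary graph. Since $I_1, I_2$ are typical and agree outside $C_i$, every token sits on a vertex of $U^B$, and any subset of $U^B$ is an independent set because the vertices of $U^B$ are pairwise non-adjacent. Thus, as long as I move tokens only among the vertices of $U^{\C}_i$ and leave $I_1 \setminus U^{\C}_i$ untouched, every intermediate placement is automatically a typical independent set; the whole problem is to realize the target occupancy $I_2 \cap U^{\C}_i$ from $I_1 \cap U^{\C}_i$. To capture the legal single moves, define an auxiliary graph $H$ on vertex set $U^{\C}_i$ in which $u \sim u'$ exactly when $u$ and $u'$ share a common neighbor in $V^{\C}_i$. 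Two distinct vertices of $U^{\C}_i$ are never adjacent in $G$, so such a common neighbor puts them at distance exactly $2$; moving a token from $u$ to an empty $u'$ with $u \sim_H u'$ is therefore a legal $\neighbor{2}$-step (the neighbors of $u'$ lie in $V^{\C}_i \subseteq V^A$ and carry no token, so the result is independent), and it alters nothing outside $C_i$.

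Next I would verify the two facts that make this reduction sufficient. First, $H$ is connected: since $C_i$ is a connected bipartite graph, any two vertices of $U^{\C}_i$ are joined by a path alternating between $U^{\C}_i$ and $V^{\C}_i$, and consecutive $U^{\C}_i$-vertices on it share the intervening $V^{\C}_i$-vertex, hence are adjacent in $H$; concatenating gives an $H$-path. (Note this uses only the connectivity of the cluster, not the clique structure of $V^A$.) Second, $H$ has an empty vertex in the configuration $I_1 \cap U^{\C}_i$: if $U^{\C}_i$ were fully occupied then $I_1 \cap U^{\C}_i = U^{\C}_i = I_2 \cap U^{\C}_i$ by the equal-cardinality assumption, forcing $I_1 = I_2$, contrary to hypothesis.

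It remains to move the tokens on $H$ into the target occupancy, i.e.\ to show that unlabeled token sliding on a connected graph with at least one empty vertex has a connected configuration space. I would argue this through the empty vertices viewed as holes: sliding the token across an $H$-edge toward a hole moves the hole one step, and because the tokens are indistinguishable, pushing a hole along an $H$-path from $x$ to $y$ has the sole net effect of making $x$ occupied and $y$ empty. Routing the holes one at a time to the positions prescribed by $I_2 \cap U^{\C}_i$ then yields the desired occupancy, and translating each hole move back through the correspondence above produces a $\transition{2}$-transformation from $I_1$ to $I_2$ that fixes every token outside $C_i$. I expect this final connectivity step to be the crux: the reduction, the distance-$2$ check, and independence are all immediate from typicality, whereas the sliding argument needs the bookkeeping that guarantees the holes can be relocated one by one without colliding or being forced to push a token out of the cluster.
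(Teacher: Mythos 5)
Your proposal is correct in substance and rests on the same two pillars as the paper's proof: (i) typicality makes every placement of tokens on $U^B$ automatically independent and leaves every vertex of $U^{\C}_i$ unblocked, and (ii) two $U$-vertices sharing a neighbor in $V^{\C}_i$ are at distance exactly $2$, so consecutive $U$-vertices along an alternating path in the connected cluster are joined by legal $\neighbor{2}$-moves. The difference is organizational. The paper first reduces by induction to $|I_1 \symdiff I_2| = 2$, i.e.\ to realizing a single net swap $u \mapsto v$, and then performs that swap by one explicit shift along one alternating $u$--$v$ path: each token on the path advances to the next occupied position (stepping through the empty interior by distance-$2$ jumps), so no global bookkeeping is ever needed. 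You instead build the auxiliary graph $H$ (the distance-$2$ graph on $U^{\C}_i$ induced by shared cluster neighbors) and appeal to the classical connectivity of the configuration space of unlabeled sliding tokens on a connected graph with at least one hole. Your packaging is more general --- it yields in one stroke the arbitrary within-cluster redistribution that the paper obtains as Corollary~\ref{corol:samedistribution} --- at the price of importing a folklore theorem the paper never needs in full strength.

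One caveat: your claim that pushing a hole along an $H$-path from $x$ to $y$ has the \emph{sole} net effect of making $x$ occupied and $y$ empty is literally false when the interior of the path contains other holes: each such hole shifts one step backward along the path, so a careless routing order can displace holes you have already placed. You flag this bookkeeping yourself, and the repair is exactly the paper's reduction: since tokens are indistinguishable, it suffices to realize one swap at a time ($|I_1 \symdiff I_2| = 2$), and for a single swap the shift along one path has the stated net effect no matter where the interior holes lie --- the occupied positions on the path simply advance, emptying $u$ and filling $v$. With that reduction inserted, your argument closes and becomes equivalent to the paper's.
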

\begin{proof}
It is sufficient to show that the lemma holds for $I_1$ and $I_2$ such that $|I_1 \symdiff I_2| = 2$.    
Let $u$ (resp. $v$) be a vertex in $I_1 \setminus I_2$ (resp. $I_2 \setminus I_1$) and $P=u_0 (=u), v_1, u_1, v_1, \dots, u_{y} (=v)$ ($u_j \in U^{\C}_i$ and $v_j \in V^{\C}_i$) be a path in $C_i$ from $u$ to $v$.
Also, let $j_0, j_1, \dots, j_{h - 1}$ ($j_0 = 0$ and $j_x < j_{x+1}$) be the index sequence of nodes in $P \cap I_1\ (\subseteq U^{\C}_i)$ and let $j_h = y$.
Notice that $u_y$ is empty in $I_1$, and no node in $U^{\C}_i$ is blocked as long as no token exists in $V^{\C}_i$.
Thus, we can reconfigure the token placement from $I_1$ to $I_2$ by moving the token on $u_{j_x}$ to $u_{j_{x+1}}$ in descending order of $x\ (0 \le x \le y-1)$.
\qed
\end{proof}
%
Given a typical independent set $I$, we define the \emph{distribution} of $I$ as vector $(|I \cap U^{\C}_i|)_{0 \leq i \leq m - 1}$.
The following corollary is derived from Lemma~\ref{lma:innercluster}.
\begin{corollary}
\label{corol:samedistribution}
If typical independent sets $I$ and $I'$ have the same distribution, then $I \transition{2} I'$.
\end{corollary}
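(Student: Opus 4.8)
The plan is to prove the corollary by a hybrid (telescoping) argument that reconfigures the clusters one at a time, applying Lemma~\ref{lma:innercluster} at each step. The crucial structural observation I would use first is that, since $I$ and $I'$ are \emph{typical}, both are subsets of $U^B$; and because $U^B$ is the independent part of the split graph (the vertices outside the clique $G^A$), every subset of $U^B$ is automatically an independent set. This is what lets me freely assemble arbitrary hybrid token placements without having to re-verify independence.

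Concretely, for each $j$ with $0 \le j \le m$ I would define the hybrid typical independent set
\[
I^{(j)} = \Big(\bigcup_{i < j} (I' \cap U^{\C}_i)\Big) \cup \Big(\bigcup_{i \ge j} (I \cap U^{\C}_i)\Big),
\]
so that $I^{(0)} = I$ and $I^{(m)} = I'$. Each $I^{(j)}$ is a subset of $U^B$, hence a typical independent set, and since $I$ and $I'$ have the same distribution we have $|I \cap U^{\C}_i| = |I' \cap U^{\C}_i|$ for every $i$, so all the $I^{(j)}$ have the same cardinality as $I$.

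The key step is then to observe that consecutive hybrids $I^{(j)}$ and $I^{(j+1)}$ differ only inside cluster $C_j$: they agree on $U^{\C}_i$ for all $i \ne j$ by construction, and they place the same number of tokens in $U^{\C}_j$ (namely $|I \cap U^{\C}_j| = |I' \cap U^{\C}_j|$). Hence the hypotheses of Lemma~\ref{lma:innercluster} are met and it yields $I^{(j)} \transition{2} I^{(j+1)}$. Chaining these relations for $j = 0, 1, \dots, m-1$ and using that $\transition{2}$ is transitive (it is the transitive closure of $\neighbor{2}$), I conclude $I = I^{(0)} \transition{2} I^{(m)} = I'$.

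The only genuinely delicate point is confirming that each hybrid $I^{(j)}$ is a legitimate typical independent set of the correct size and that neighboring hybrids satisfy the \emph{exact} hypotheses of Lemma~\ref{lma:innercluster} (identical outside a single cluster, equal token count inside it). Both facts follow at once from the equal-distribution assumption together with the independence of $U^B$, so no idea beyond Lemma~\ref{lma:innercluster} is needed; the corollary is essentially a bookkeeping consequence of rearranging tokens cluster by cluster.
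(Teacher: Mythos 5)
Your proof is correct and matches the paper's intended derivation: the paper states the corollary as following directly from Lemma~\ref{lma:innercluster}, and the cluster-by-cluster telescoping through the hybrids $I^{(j)}$, justified by the observation that every subset of $U^B$ is independent, is precisely that implicit argument made explicit. The only cosmetic point is to skip any index $j$ with $I^{(j)} = I^{(j+1)}$, since Lemma~\ref{lma:innercluster} is stated for $I_1 \neq I_2$.
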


\subsection{Cluster Types}
In this section, we introduce the crucial notion of cluster types.
For each $i$, let $v^{\min}_i$ be any vertex in  $V^{\C}_i$ with the minimum degree in $C_i$ and let $N_i = N_{C_i}(v^{\min}_i)$.
In the following, we assume without loss of generality that $|N_0| \leq |N_1| \leq \dots \leq |N_m-1|$ (recall that m is the number of
    clusters of $G$).
Given an independent set $I$, we call $f_i(I) = |N_i \cap I|$ the \emph{occupancy} of cluster $C_i$ on $I$.
By definition, the occupancy of a cluster $C_i$ satisfying $U^{\C}_i = \emptyset$ is 0.
For a typical independent set $I$ of $G$, let $I^{\ast}$ be the typical independent set with the same distribution as $I$ such that $f_i(I^\ast)$ of each cluster $C_i$ is minimum among all typical independent sets with the same distribution as $I$.
We now consider the following classification of clusters.
\begin{definition}
\label{def:clastertype}
Given a typical independent set $I$, a cluster $C_i$ is called \emph{Free} if $f_i(I^{\ast}) = 0$, \emph{Pseudo-Free} if $f_i(I^{\ast}) = 1$, and \emph{Bounded} otherwise.
This is referred to as the \emph{type} of the cluster $C_i$, and only depends on the distribution of $C_i$.
\end{definition}

The properties of each cluster type for a typical independent set $I$ are intuitively described next.
\begin{description}
\item[\textbf{Free cluster}] If there exists an $i$ such that $C_i$ is Free,
    then $N_{G}(v^{\min}_i) \cap I^{\ast} = \emptyset$ by definition (recall that $I^{\ast}$ contains no vertex in $V^A$).
Also, since the distance between any vertex in $G$ and $v^{\min}_i \in V^A$ is at most 2, after transforming $I$ to $I^{\ast}$ by Corollary \ref{corol:samedistribution}, any token in any cluster $C_j$ can be moved via $v^{\min}_i$ to any vertex in any distinct cluster $C_h$.
This move is possible even if $h = i$, which possibly makes $C_i$ become Pseudo-Free.
\item[\textbf{Pseudo-Free cluster}] If $C_i$ is Pseudo-Free, after transforming $I$ to $I^{\ast}$, the token in $N_i \cap I^{\ast}$ can be moved via $v^{\min}_i$ to any cluster vertex, which makes $C_i$ become Free.
\item[\textbf{Bounded cluster}] If $C_i$ is Bounded, tokens can move into $C_i$ from other clusters (and vice versa) only if there exists a Free Cluster.
\end{description}
\begin{figure}[t]
  \begin{minipage}[b]{0.32\linewidth}
    \centering
    \includegraphics[keepaspectratio, scale=0.25]{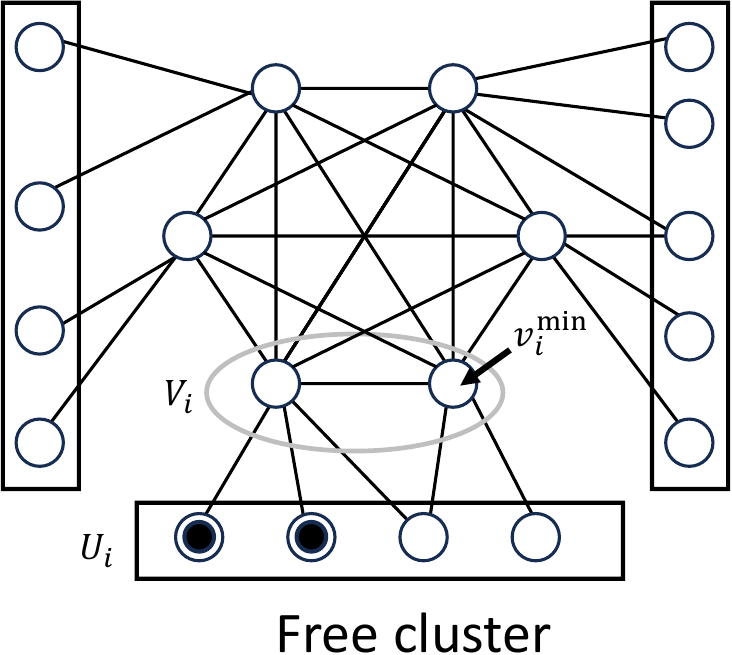}
    \subcaption[a]{}
  \end{minipage}
  \begin{minipage}[b]{0.32\linewidth}
    \centering
    \includegraphics[keepaspectratio, scale=0.25]{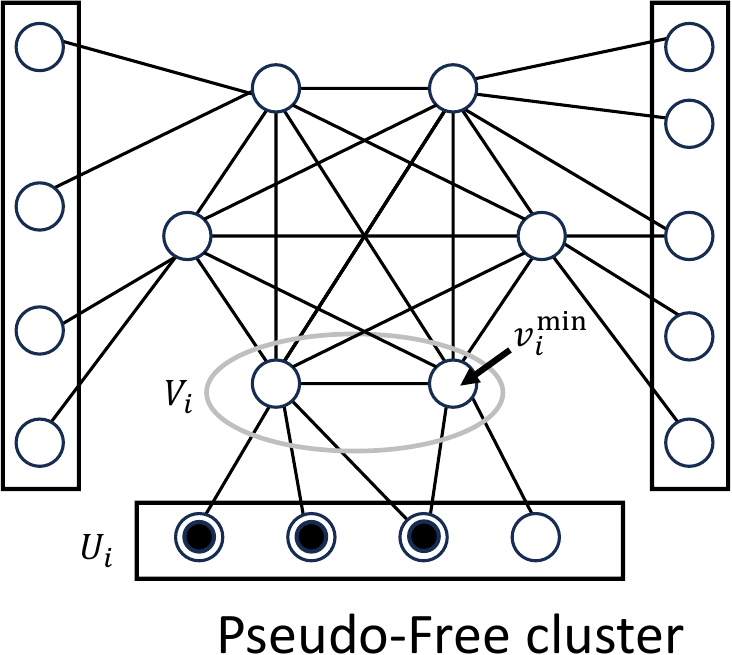}
    \subcaption[b]{}
  \end{minipage}
    \begin{minipage}[b]{0.32\linewidth}
    \centering
    \includegraphics[keepaspectratio, scale=0.25]{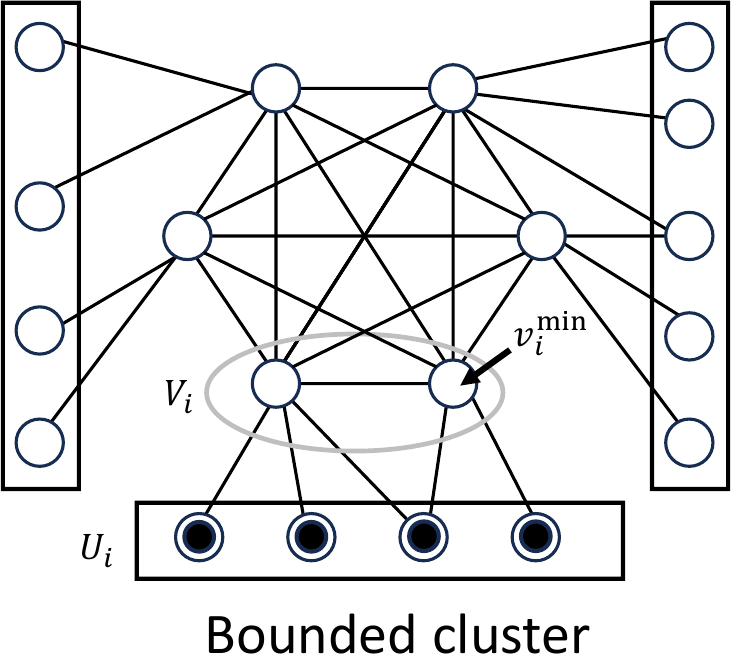}
    \subcaption[c]{}
  \end{minipage}
  \caption{
  An example of the cluster types. The cluster type of $C_i$ is (a) a Free Cluster, (b) a Pseudo-Free Cluster, and  (c) a Bounded Cluster.
  }
    \label{fig:cluster_types}
\end{figure}
Examples of cluster types are shown in Figure~\ref{fig:cluster_types}.
We say that cluster $C_i$ is \emph{full} in a typical independent set $I$ if $I \cap U^{\C}_i = U^{\C}_i$.
By definition, no Free Cluster is full and a Pseudo-Free Cluster $C_i$ is full only if $|N_i| = 1$.
Also, for any two independent sets $I$ and $I'$ with the same distribution, if the type of $C_i$ is $X$ for $I$, then its type is also $X$ for $I'$.
Similarly, if $C_i$ is full for $I$, then $C_i$ is full for $I'$.
In the following, let $\mathcal{F}(I) \subseteq \mathcal{C}$ be the set of Free Clusters for $I$.
We show three lemmas.
\begin{lemma}
\label{lma:block}
 If a cluster $C_h\in\mathcal{C}$ is Pseudo-Free or Bounded for a typical independent set $I$, then the vertices of $V^{\C}_h$ are all blocked by $I$.
\end{lemma}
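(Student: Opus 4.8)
The plan is to establish the contrapositive: I would show that if some vertex $v \in V^{\C}_h$ is \emph{not} blocked by $I$, then $C_h$ must be Free, i.e.\ $f_h(I^{\ast}) = 0$. Since the type of a cluster is determined entirely by $f_h(I^{\ast})$, this is equivalent to the stated implication. The first step is to reduce the blocking condition to the internal structure of $C_h$. Because $I$ is typical, it contains no vertex of $V^A$, so $I \subseteq U^B$; and for $v \in V^{\C}_h \subseteq V^A$ the only neighbors of $v$ lying in $U^B$ are exactly its cluster-neighbors $N_{C_h}(v) \subseteq U^{\C}_h$. Hence $v$ is blocked by $I$ precisely when $N_{C_h}(v) \cap I \neq \emptyset$. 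Writing $d_h = |I \cap U^{\C}_h|$ for the number of tokens inside the cluster, the assumption that $v$ is unblocked says that all $d_h$ tokens avoid $N_{C_h}(v)$, so $d_h \le |U^{\C}_h| - |N_{C_h}(v)|$.

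The second step is a counting formula for the minimum occupancy. Since the clusters are the connected components of $G^B$ and $U^B$ is an independent set, any subset of $U^{\C}_h$ of the prescribed size is admissible and the placements in distinct clusters are mutually unconstrained; thus $I^{\ast}$ minimizes $f_h$ within $C_h$ independently, by placing tokens on the $|U^{\C}_h| - |N_h|$ vertices of $U^{\C}_h \setminus N_h$ before using any vertex of $N_h$. This gives $f_h(I^{\ast}) = \max\{0,\, d_h - (|U^{\C}_h| - |N_h|)\}$, and in particular $C_h$ is Free (equivalently $f_h(I^{\ast}) = 0$) if and only if $d_h \le |U^{\C}_h| - |N_h|$.

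The final step combines the two inequalities through the defining property of $v^{\min}_h$. Since $v^{\min}_h$ has minimum degree in $V^{\C}_h$, we have $|N_h| = |N_{C_h}(v^{\min}_h)| \le |N_{C_h}(v)|$, and substituting this into the bound obtained from the unblocked $v$ yields $d_h \le |U^{\C}_h| - |N_{C_h}(v)| \le |U^{\C}_h| - |N_h|$. By the occupancy formula this forces $f_h(I^{\ast}) = 0$, so $C_h$ is Free, proving the contrapositive. The step I expect to require the most care — and which I regard as the conceptual crux rather than the routine counting — is twofold: justifying that the minimizing placement $I^{\ast}$ can indeed pack tokens outside $N_h$ first (which hinges on $U^B$ being independent so that the greedy choice is always feasible), and recognizing that it is the \emph{minimum}-degree vertex $v^{\min}_h$ that supplies the inequality $|N_h| \le |N_{C_h}(v)|$ in exactly the direction needed to close the argument for every $v \in V^{\C}_h$ simultaneously.
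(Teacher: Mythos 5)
Your proof is correct and takes essentially the same route as the paper's: both argue by contraposition/contradiction that an unblocked $v \in V^{\C}_h$ yields $|I \cap U^{\C}_h| \le |U^{\C}_h| - |N_{C_h}(v)| \le |U^{\C}_h| - |N_h|$ via the minimum-degree property of $v^{\min}_h$, so $I^{\ast}$ avoids $N_h$ entirely and $C_h$ is Free. Your extra details (reducing blockedness to cluster-neighbors because $I$ is typical, and the explicit formula $f_h(I^{\ast}) = \max\{0,\, d_h - (|U^{\C}_h| - |N_h|)\}$ justified by $U^B$ being independent) simply make explicit what the paper leaves implicit.
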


\begin{proof}
We prove the lemma by contradiction.
If a vertex $v \in V^{\C}_h$ is unblocked, then $N_{C_h}(v) \cap I = \emptyset$, which implies  
$|I \cap U^{\C}_h| \leq |U^{\C}_h| - |N_{C_h}(v)|$.
Since $|N_h| \leq |N_{C_h}(v)|$ by definition, $I^{\ast} \cap N_i = \emptyset$ and thus $C_h$ is Free, which is a contradiction.
\qed
\end{proof}

\begin{lemma}
\label{lma:notfree}
Let $I$ be a typical independent set satisfying one of the following conditions. 
\begin{description}
\item[\textbf{(C1)}] All clusters are Bounded for $I$.
\item[\textbf{(C2)}] For any Pseudo-Free Cluster $C_i$ for $I$, all clusters in $\mathcal{C} \setminus \{C_i\}$ are full.
\end{description}
Then, any typical independent set $I'$ such that $I \transition{2} I'$ has the same distributions as $I$.
\end{lemma}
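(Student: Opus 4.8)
The plan is to track how tokens can migrate between clusters and to show that, under either hypothesis, no token can ever relocate from one cluster's $U$-part to a different one, so that the distribution $(|I\cap U^{\C}_i|)_i$ is invariant. First I would record two structural facts about split graphs. Since every vertex of $U^B$ has all of its neighbors inside its own cluster's $V$-part, while $V^A$ is a clique, any two vertices $u\in U^{\C}_i$ and $u'\in U^{\C}_j$ with $i\neq j$ satisfy $\dist_G(u,u')=3$; hence a single $\neighbor{2}$ move can never carry a token directly from one $U$-part to another. Also, because $V^A$ is a clique, every independent set of $G$ contains at most one vertex of $V^A$. Consequently, along a sequence $I=I_0\neighbor{2}\cdots\neighbor{2}I_r=I'$ the distribution can change only through an \emph{excursion}: a token leaves some $U^{\C}_a$ onto a vertex of $V^A$, possibly slides within $V^A$, and later descends into some $U^{\C}_b$; the distribution is affected exactly when $a\neq b$. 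I would further note that both (C1) and (C2) force $\mathcal{F}(I)=\emptyset$, since under (C2) a Pseudo-free cluster's companions are all full and a full cluster is never Free.

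Next I would argue by contradiction, assuming the distribution of $I'$ differs from that of $I$ and taking the first excursion whose endpoints satisfy $a\neq b$. By minimality every earlier excursion is net-zero, and every move occurring outside an excursion keeps its token inside its own cluster (a token in $U^B$ can reach only vertices of its own cluster within distance $2$) and hence preserves the distribution; therefore the configuration $J$ reached immediately before the critical excursion is typical and has the same distribution as $I$. Since cluster types (Definition~\ref{def:clastertype}) depend only on the distribution, $J$ carries exactly the same Free/Pseudo-free/Bound classification as $I$, and in particular still has no Free clusters.

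I would then examine the one move that first places a token onto $V^A$ from $J$, say from $u\in U^{\C}_a$ to $v\in V^{\C}_c$. If $\dist_G(u,v)=2$, then $v$ must be unblocked in $J$, which by Lemma~\ref{lma:block} forces $C_c$ to be Free, impossible. If $\dist_G(u,v)=1$, then $c=a$ and $u$ must be the unique token of $J$ lying in $N_{C_a}(v)$; the counting used in Lemma~\ref{lma:block} then applies, since at least $|J\cap U^{\C}_a|-1$ tokens of $C_a$ avoid $N_{C_a}(v)$ and $|N_{C_a}(v)|\ge|N_a|$, giving $f_a(J^{\ast})\le 1$, so $C_a$ is not Bound. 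As no Free cluster exists, $C_a$ is Pseudo-free. Under (C1) this is a contradiction outright. Under (C2), $C_a$ being Pseudo-free forces every other cluster of $J$ to be full; such full clusters stay frozen throughout the excursion, because they admit no empty $U$-slot and the single available $V^A$ position is already occupied by the escaped token, so the escaped token can descend only into $C_a$. Hence $b=a$, contradicting $a\neq b$, and the distribution cannot change.

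The step I expect to be the main obstacle is precisely the distance-$1$ release onto $V^A$: a Pseudo-free cluster genuinely \emph{can} push its single $N_a$-token onto $v^{\min}_a$, because $v^{\min}_a$ is then ``blocked'' only by the very token that moves. Thus it is false that the gateway vertices of $V^A$ remain permanently unusable, and one cannot dispatch the lemma by merely invoking Lemma~\ref{lma:block} at the start. The function of hypothesis (C2) is exactly to trap any such escaped token by keeping all other clusters full, which is why (C2) must be phrased as it is rather than as a plain absence-of-Free-clusters condition. A secondary point I would flag is the degenerate reading of (C2) when no Pseudo-free cluster is present: there the conclusion can fail if a Free cluster exists, so the intended scope is that at least one Pseudo-free cluster occurs (otherwise the configuration is covered by (C1)), and I would make this explicit in the write-up.
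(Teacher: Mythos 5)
Your proof is correct and takes essentially the same route as the paper's: both rest on Lemma~\ref{lma:block}, the distance-$3$ separation between distinct $U$-parts, and the observation that a token escaping onto the clique $V^A$ (necessarily from a Pseudo-free cluster, via $v^{\min}_a$) blocks all of $V^A$ and freezes the full clusters, so it can only descend back into its own cluster; your ``first excursion with $a\neq b$'' contradiction is a tightened formalization of the paper's ``by repeating the argument'' step. Your side remark on the vacuous reading of (C2) is accurate but harmless, since the paper invokes (C2) only when at least one Pseudo-free cluster exists and no cluster is Free.
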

\begin{proof}
We show that any transition sequence starting from $I$ cannot change the distribution at typical independent sets.
Considering any transition $I \neighbor{2} \hat{I}$, let $t$ be the token moving in this transition, and $C_i$ be the cluster where $t$ is placed in $I$.
If $I$ satisfies condition \textbf{(C1)}, then each cluster is Pseudo-Free or Bounded for $I \setminus \{t\}$, and thus all vertices in $V^A$ are blocked from Lemma~\ref{lma:block}.
Since the distance between any vertex in $U^{\C}_i$ and any vertex in $U^{\C}_h$ ($h \neq i$) is 3, 
$t$ can move to only a vertex in $U^{\C}_i$, which preserves the distribution.
Consider the case that $I$ satisfies condition \textbf{(C2)}.  Condition \textbf{(C2)} implies that no cluster is Free and thus one cluster $C_j$ is Bounded or Pseudo-Free.  If $C_j$ is Bounded, $t$ can move to only a vertex in $U^{\C}_j$ since all clusters other than $C_i$ are full, which preserves the distribution.  If $C_j$ is Pseudo-Free, $t$ can move to only a vertex in $V^A \cup U^{\C}_j$.  When $t$ moves to a vertex in $U^{\C}_j$, the distribution remains unchanged.  When $t$ moves to a vertex in $v^A$, the vertex in $V^A$ is $v^{\min}_j$ since only $v^{\min}_j$ is not blocked in $I \setminus \{t\}$.  The token on $v^{\min}_i$ can move to only a vertex in $V^A \cup U^{\C}_j$.  By repeating the argument, we can show that $I$ and $I'$ have the same distribution.
\qed
\end{proof}

\begin{lemma}
\label{lma:commonfree}
If $\mathcal{F}(I_s) \cap \mathcal{F}(I_t) \neq \emptyset$, then $I_s \transition{2} I_t$.
\end{lemma}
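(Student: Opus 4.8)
The plan is to use the common free cluster $C_r \in \mathcal{F}(I_s)\cap\mathcal{F}(I_t)$ as a routing hub that lets us redistribute tokens among clusters, so that $I_s$ can be driven to a typical independent set $I'$ having exactly the same distribution as $I_t$; then $I' \transition{2} I_t$ follows immediately from Corollary~\ref{corol:samedistribution}. By the reductions already made, I may assume $I_s,I_t$ are typical with $|I_s|=|I_t|=p\le |U^B|$. First I would normalize $I_s$: since $C_r$ is free, by Lemma~\ref{lma:innercluster} I can rearrange the tokens inside $C_r$ (without changing its count) so that $N_r\cap I_s=\emptyset$, which makes $v^{\min}_r$ unblocked. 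Because $v^{\min}_r\in V^A$ and $\dist_G$ from $v^{\min}_r$ to any vertex of $U^B$ is at most $2$, this vertex can serve as a hub.

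Next I describe the hub mechanism. To relocate one token from a cluster $C_a$ to a cluster $C_b$ I use two $\neighbor{2}$ steps: move a token from some $u\in U^{\C}_a$ to $v^{\min}_r$ (legal since $v^{\min}_r$ is empty, unblocked, and $\dist_G(u,v^{\min}_r)\le 2$), and then move it from $v^{\min}_r$ to an empty vertex $w\in U^{\C}_b$ (legal since $\dist_G(v^{\min}_r,w)\le 2$). The key point making the second step legal is that in a typical set every empty vertex of $U^{\C}_b$ is unblocked: its neighbors lie in $V^{\C}_b\subseteq V^A$, and a typical set places no token in $V^A$. The only vertex of $V^A$ carrying a token during a relocation is the parked one at $v^{\min}_r$, which is adjacent to $w$ only if $b=r$ and $w\in N_r$; so as long as I never aim a token at $N_r$, each relocation returns to a typical set and leaves $v^{\min}_r$ unblocked for the next one.

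Finally I would schedule the relocations. Writing $d,d'$ for the distributions of $I_s,I_t$, I match the clusters other than $C_r$ first, by transferring tokens between an over-full $C_a$ (with $d_a>d'_a$) and an under-full $C_b$ (with $d_b<d'_b$); since no such $C_b$ ever exceeds $d'_b\le|U^{\C}_b|$, it always has an empty (hence unblocked) slot. Because $\sum_j d_j=\sum_j d'_j=p$, once all clusters $C_j$ with $j\neq r$ are matched, $C_r$ automatically holds $d'_r$ tokens, so $I_s$ reaches a set with distribution $d'$. The residual imbalance is absorbed by $C_r$ itself: if $C_r$ must shed tokens I only move tokens out of it (which keeps it free and $N_r$ empty), and if $C_r$ must gain tokens I route them into empty vertices of $U^{\C}_r\setminus N_r$. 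This last move is where the hypothesis $C_r\in\mathcal{F}(I_t)$ is essential: freeness of $C_r$ for $I_t$ gives $f_r(I_t^\ast)=0$, i.e. $d'_r$ tokens fit into $U^{\C}_r\setminus N_r$, so $d'_r\le |U^{\C}_r\setminus N_r|$ and the needed empty vertices outside $N_r$ exist; placing tokens only outside $N_r$ keeps $v^{\min}_r$ unblocked throughout. I expect the main obstacle to be exactly this bookkeeping, namely guaranteeing that the hub stays usable while $C_r$'s own token count changes, and it is resolved by the case distinction above together with the inequality $d'_r\le|U^{\C}_r\setminus N_r|$ coming from $C_r\in\mathcal{F}(I_t)$.
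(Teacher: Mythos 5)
Your proof is correct, and it shares the paper's central mechanism---using the unblocked minimum-degree vertex $v^{\min}_r$ of the common Free cluster as a two-step routing hub (every vertex of $U^B$ lies within distance $2$ of it, and it stays unblocked as long as $N_r$ is kept token-free)---but it organizes the argument differently. The paper proceeds by induction on $|I_s \symdiff I_t|$, moving in each step a token from some $u \in I_s \setminus I_t$ directly to some $u' \in I_t \setminus I_s$ via $v^{\min}_i$; the delicate case where this move would push a token \emph{into} the common Free cluster (possibly destroying its freeness) is handled by a swap to the target side, i.e., by instead transforming $I_t$ into $I'_t$ and using the symmetry of $\transition{2}$. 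You avoid both the induction and that case-swap by decoupling the problem into two phases: first drive $I_s$ to a typical set with the \emph{same distribution} as $I_t$ by cluster-to-cluster relocations through the hub, then finish with Corollary~\ref{corol:samedistribution}. The obstacle the paper resolves with the $I'_t$ trick you resolve instead with the inequality $d'_r \le |U^{\C}_r \setminus N_r|$, which follows from $C_r \in \mathcal{F}(I_t)$ and guarantees that tokens refilling $C_r$ can always land outside $N_r$, preserving the hub invariant; this makes explicit why freeness for $I_t$ (and not just $I_s$) is needed, a point that is somewhat implicit in the paper's inductive bookkeeping. Your per-move legality checks (empty vertices of $U^B$ are unblocked in typical sets since their neighborhoods lie in $V^A$; the parked token blocks a landing vertex $w$ only when $w \in N_r$) are sound, and termination follows since each relocation decreases the total distribution imbalance. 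Both arguments are valid; yours trades the paper's shorter induction for a more transparent invariant-based schedule.
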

\begin{proof}
Let $C_i$ be any cluster in $\mathcal{F}(I_s) \cap \mathcal{F}(I_t)$.
We can assume $|I_s \cap U^{\C}_i| \geq |I_t \cap U^{\C}_i|$ by the symmetry of the relation $\transition{2}$, and $I_s = I^{\ast}_s$ and $I_t = I^{\ast}_t$ by Lemma~\ref{lma:innercluster}.
The proof is by induction on the size of $|I_s \symdiff I_t|$.

\noindent
(Basis) When $|I_s \symdiff I_t|=0$ (or $I_s=I_t$), then it is obvious that $I_s \transition{2} I_t$.

\noindent
(Inductive Step) Assuming the lemma holds for any $I_s$ and $I_t$ with $|I_s \symdiff I_t| \leq k$ $(k \geq 0)$, we prove the lemma for $|I_s \symdiff I_t|= k + 2$.
Let $u \in I_s \setminus I_t$ and $u' \in I_t \setminus I_s$.
Because both $I_s$ and $I_t$ are typical,  $v^{\min}_i$ is not blocked.
Also, the distance between $v^{\min}_i$ and any vertex is at most 2.
Thus, the token on $u$ can move to $u'$ via $v^{\min}_i$.
Let  $I'_s$ be the independent set after the move, then $I'_s$ is typical and $|I'_s \symdiff I_t| = k$ holds.
Unless $u \not\in U^{\C}_i$ and $u' \in U^{\C}_i$, $C_i$ remains Free for $I'_s$ and thus $I'_s \transition{2} I_t$ by inductive assumption, which implies $I_s \transition{2} I_t$.
In the case of $u \not\in U^{\C}_i$ and $u' \in U^{\C}_i$, let $I'_t$ be an independent set obtained by moving a token on $u'$ to $u$.
By the same argument, $I_s \transition{2} I_t$ is shown from $I_s \transition{2} I'_t$.
\qed
\end{proof}
%

\subsection{Technical Ideas for the Algorithm}
In this section, we explain some technical ideas for the algorithm (fully described in Section~\ref{lma:putting_all}). We first consider the following three cases.
%
\begin{enumerate}
\item For $I_s$, all clusters are Bounded.
\item For $I_s$, there exists no Free Cluster, and one or more clusters are Pseudo-Free.
\item For $I_s$, a Free Cluster exists.
\end{enumerate}

For case 1, by Lemma~\ref{lma:notfree}, $I_s \transition{2} I_t$ holds if and only if $I_s$ and $I_t$ have the same distribution.
For case 2 satisfying condition \textbf{(C2)} of Lemma~\ref{lma:notfree}, $I_s \transition{2} I_t$ holds if and only if $I_s$ and $I_t$ have the same distribution.
Thus, in the above cases, whether $I_s \transition{2} I_t$ holds or not can be determined in polynomial time.
For case 2 not satisfying condition \textbf{(C2)}, we can make a Pseudo-Free Cluster $C_i$ Free by moving one token from $C_i$ to a non-full cluster, which leads us to Case 3.
So the remaining case we need to consider is case 3.
By a similar argument for $I_t$, the only case we need to consider is the one where a Free Cluster exists in $I_t$.
So we consider only the case where $I_s$ and $I_t$ has a Free Cluster respectively.

When $I_s$ and $I_t$ have a common Free Cluster, Lemma~\ref{lma:commonfree} guarantees $I_s \transition{2} I_t$.  Otherwise, $I_s \transition{2} I_t$ holds if there exist $I'_s$ and $I'_t$ such that $I_s \transition{2} I'_s$, $I_t \transition{2} I'_t$ and $\mathcal{F}(I'_s) \cap \mathcal{F}(I'_t) \neq \emptyset$.
The following lemma holds.
\begin{lemma}
\label{lma:sufficientcondition}
Let $C_i$ be any Free Cluster for a typical independent set $I$.
Let $C_j\ (j \ne i)$ be any cluster for $I$ satisfying $|N_i| \geq k$ and $|U^B| \geq |I|+|N_i| + |N_j| -k$ for some $k \in \{0, 1, 2\}$, then there exists a typical independent set $I'$ such that $I \transition{2} I'$ and $C_j$ is Free for $I'$.  Furthermore, $I'$ can be found in polynomial time.
\end{lemma}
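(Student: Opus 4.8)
The plan is to build $I'$ explicitly, together with a witnessing $\transition{2}$-sequence, using $v^{\min}_i$ as a routing hub. First I would replace $I$ by $I^{\ast}$: since $I$ and $I^{\ast}$ share a distribution, Corollary~\ref{corol:samedistribution} gives $I \transition{2} I^{\ast}$, and now $C_i$ being Free means $N_i \cap I^{\ast} = \emptyset$, so $v^{\min}_i$ is unblocked. Because $v^{\min}_i \in V^A$ lies within distance $2$ of every vertex of $G$, any token can jump to $v^{\min}_i$ and then from $v^{\min}_i$ to any empty vertex of $U^B$; as long as $N_i$ stays empty the hub remains usable, giving the elementary operation ``relocate one token to any empty vertex of $U^B \setminus N_i$.''

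Next I would fix the target. Set $o = \max\{0,\ |I| + |N_i| + |N_j| - |U^B|\}$; the hypothesis yields $o \le k \le 2$ and $|N_i| \ge k \ge o$. I would aim for an $I'$ in which $C_j$ holds exactly $|U^{\C}_j| - |N_j|$ tokens, all in $U^{\C}_j \setminus N_j$ (so $C_j$ is Free), every cluster other than $C_i,C_j$ is packed to capacity, $U^{\C}_i \setminus N_i$ is packed, and the remaining $o$ tokens sit in $N_i$. A capacity count shows this is consistent: placing all but $o$ tokens in $U^B \setminus (N_i \cup N_j)$ requires $|I| - o \le |U^B| - |N_i| - |N_j|$, which is exactly the hypothesis, and the leftover $o \le |N_i|$ tokens fit into $N_i$.

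To realize $I \transition{2} I'$ I would move tokens in two phases. In the first phase, using the clean hub $v^{\min}_i$, I relocate all excess tokens of $C_j$ except $o$ of them into the empty slots of $U^B \setminus N_i$ (the count above guarantees room), leaving exactly $o$ tokens of $C_j$ in $N_j$ while keeping $v^{\min}_i$ clean. The second phase moves these $o$ overflow tokens into distinct vertices $w_1,\dots,w_o$ of $N_i$ (which exist since $|N_i| \ge o$): the first is routed $N_j \to v^{\min}_i \to w_1$, after which $v^{\min}_i$ becomes blocked; and when $o = 2$ the last token $x_2 \in N_j$ is routed by the \emph{relay} $x_2 \to v^{\min}_j \to w_2$. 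The relay is valid precisely because $x_2$ is now the only token of $N_j$, so lifting it onto its own hub $v^{\min}_j$ leaves $N_j$ empty and keeps the set independent, and then $v^{\min}_j \in V^A$ reaches $w_2 \in N_i$ within distance $2$. The resulting $I'$ is typical with $C_j$ Free, and the whole sequence has polynomial length and is computable in polynomial time.

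The delicate point, and the step I expect to be the main obstacle, is the $o = 2$ case. Once one overflow token occupies $N_i$, the hub $v^{\min}_i$ is blocked, and since $v^{\min}_i$ is a minimum-degree vertex of $V^{\C}_i$, no other vertex of $V^{\C}_i$ can be unblocked while $U^{\C}_i \setminus N_i$ is packed, so $C_i$ offers no second hub; a naive attempt deadlocks in the ``one token in $N_i$, one token in $N_j$'' configuration. The resolution is the self-lifting relay through $v^{\min}_j$, which is available only because the construction has deliberately driven $N_j$ down to a single token before this move. Verifying that this single relay move both preserves independence and lands in the still-vacant second slot of $N_i$ (guaranteed by $|N_i| \ge 2$) is the heart of the argument.
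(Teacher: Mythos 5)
Your proposal is correct and follows essentially the same route as the paper's proof: normalize to $I^{\ast}$, use the counting identity on $N' = U^B \setminus (N_i \cup N_j)$ to show the hypothesis provides enough empty slots, evacuate $N_j$ through the unblocked hub $v^{\min}_i$, and resolve the final two-token case by relaying the last token through $v^{\min}_j$ once $N_j$ holds a single token --- exactly the paper's $k=2$ step. Your overflow parameter $o = \max\{0,\ |I|+|N_i|+|N_j|-|U^B|\}$ is a mild (and slightly cleaner) refinement of the paper's ``leave exactly $k$ tokens in $N_j$,'' but it does not change the argument.
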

\begin{proof}
Without loss of generality, we assume $I = I^{\ast}$ by Lemma~\ref{lma:innercluster}.
Let $N^\prime=U^B \setminus (N_i \cup N_j)$.
Because $N^\prime$,$N_i$,$N_j$ are mutually disjoint, $|I|=|I \cap N^\prime|+|I \cap N_i|+|I \cap N_j|$.
Since $C_i$ is Free for $I$, $|I \cap N_i|=0$ holds and thus $|I|=|I \cap N^\prime|+|I \cap N_j|$.
\begin{align*}
& |U^B| \geq |I|+|N_i| + |N_j| - k \\
&\Leftrightarrow |U^B| - |N_i| - |N_j| \geq |I| - k\\
&\Leftrightarrow |N^\prime| \geq |I\cap N^\prime| + |I\cap N_j| - k\\
&\Leftrightarrow |N^\prime\setminus I| \geq |I \cap N_j| -k
\end{align*}
The last inequality implies that there exist at least $|I \cap N_j| - k$ empty vertices in $N^\prime$.
Using the Free Cluster property of $C_i$, $|I \cap N_j| - k$ tokens on $I \cap N_j$ can be moved to vertices in $N'$ via $v^{\min}_i$, which leaves $k$ tokens in $N_j$ ($0 \leq k \leq 2$).
Let $\hat{I}$ be the typical independent set after the tokens move.
When $k=0$,  $\hat{I}$ is $I^\prime$.
When $k=1$, $I^\prime$ is obtained from $\hat{I}$ by moving the remaining token in $N_j$ to a vertex in $N_i$ via $v^{\min}_i$.
When $k=2$, one of the remaining tokens can be moved to a vertex in $N_i$.  For the resultant independent set, $C_j$ is Pseudo-Free.  Thus $I^\prime$ can be obtained by moving the last token in $N_j$ to a vertex in $N_i$ via $v_j^{min}$ (from $k=2$, $N_i$ contains an empty vertex).  It is clear that $I^\prime$ can be found in polynomial time.
\qed
\end{proof}
When a cluster $C_j$ satisfies the condition of Lemma~\ref{lma:sufficientcondition}, any cluster $C_{j^\prime}\ (j' \leq j)$ also satisfies the condition because of $|N_{j'}| \leq |N_j|$.
Similarly, when a Free Cluster $C_i$ for $I$ satisfies the condition for some $j$, any Free Cluster $C_i^\prime\ (i^\prime < i)$ also satisfies the condition for $j$.
Thus, without loss of generality, the lemma can assume that $i$ is the smallest such that $C_i$ is Free for $I$ and $j = 0$.
By combining with Lemma~\ref{lma:commonfree}, the following corollary
is derived.
In the corollary, $i(I)$ denotes the minimum $i$ such that $C_i$ is Free for $I$.
\begin{corollary}
\label{corol:sufficient}
Let $I_s$ and $I_t$ be any typical independent sets having a Free Cluster respectively and $i(I_s) \neq i(I_t)$ holds.  If both of the following two conditions hold, then $I_s \transition{2} I_t$.
\begin{itemize}
\item For some $k_1 \in \{0, 1, 2\}$, $|N_{i(I_s)}| \geq k_1$ and $|U^B| \geq |I_s|+|N_{i(I_s)}| + |N_0| - k_1$,
\item For some $k_2 \in \{0, 1, 2\}$, $|N_{i(I_t)}| \geq k_2$ and $|U^B| \geq |I_t|+|N_{i(I_t)}| + |N_0| - k_2$
\end{itemize}
\end{corollary}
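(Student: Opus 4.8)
The plan is to reduce the corollary to Lemma~\ref{lma:commonfree} by manufacturing, from $I_s$ and $I_t$, two typical independent sets $I'_s$ and $I'_t$ with $I_s \transition{2} I'_s$ and $I_t \transition{2} I'_t$ that share $C_0$ as a common Free cluster. Once such $I'_s, I'_t$ are produced, $C_0 \in \mathcal{F}(I'_s) \cap \mathcal{F}(I'_t)$, so Lemma~\ref{lma:commonfree} gives $I'_s \transition{2} I'_t$; chaining this with $I_s \transition{2} I'_s$ and, by the symmetry of $\transition{2}$, with $I'_t \transition{2} I_t$, yields $I_s \transition{2} I_t$ by transitivity.

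To build $I'_s$ I would first treat the degenerate case $i(I_s) = 0$: here $C_0$ is already Free for $I_s$, so I simply set $I'_s = I_s$. Otherwise $i(I_s) \neq 0$, and I invoke Lemma~\ref{lma:sufficientcondition} with the Free cluster chosen to be $C_{i(I_s)}$ and the target cluster chosen to be $C_0$; the requirement $j \neq i$ of that lemma is met since $0 \neq i(I_s)$. The first displayed hypothesis of the corollary, namely $|N_{i(I_s)}| \geq k_1$ and $|U^B| \geq |I_s| + |N_{i(I_s)}| + |N_0| - k_1$, is exactly the precondition of Lemma~\ref{lma:sufficientcondition} with $k = k_1$ and $N_j = N_0$, so the lemma returns a typical $I'_s$ with $I_s \transition{2} I'_s$ for which $C_0$ is Free. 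Applying the same argument to $I_t$ with the second displayed hypothesis and $k = k_2$ (and setting $I'_t = I_t$ in the degenerate case $i(I_t) = 0$) produces a typical $I'_t$ with $I_t \transition{2} I'_t$ for which $C_0$ is Free, completing the reduction.

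I do not expect a genuine obstacle here, since the statement is essentially the composition of Lemma~\ref{lma:sufficientcondition} (make a prescribed cluster Free) with Lemma~\ref{lma:commonfree} (reconfigure between sets sharing a Free cluster); the monotonicity remarks preceding the corollary are what justify specializing Lemma~\ref{lma:sufficientcondition} to the smallest Free index $i(\cdot)$ and to $j=0$. The only points demanding care are bookkeeping: I must isolate the degenerate cases $i(I_s)=0$ and $i(I_t)=0$ so that the application of Lemma~\ref{lma:sufficientcondition} never violates its $j\neq i$ constraint, and I must appeal to the symmetry and transitivity of $\transition{2}$ to concatenate the three reconfiguration segments. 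The hypothesis $i(I_s) \neq i(I_t)$ guarantees that at most one degenerate case occurs and situates the corollary precisely in the regime not already settled by a direct appeal to Lemma~\ref{lma:commonfree} (which covers the case where $C_{i(I_s)} = C_{i(I_t)}$ is itself a common Free cluster).
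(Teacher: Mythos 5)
Your proposal is correct and follows essentially the same route as the paper, which obtains the corollary by applying Lemma~\ref{lma:sufficientcondition} with the smallest Free index $i(\cdot)$ and target $j=0$ to make $C_0$ Free on both sides, and then invoking Lemma~\ref{lma:commonfree} on the common Free cluster together with the symmetry and transitivity of $\transition{2}$. Your explicit isolation of the degenerate cases $i(I_s)=0$ or $i(I_t)=0$ (forced by the $j \neq i$ constraint of Lemma~\ref{lma:sufficientcondition}, and at most one of which can occur since $i(I_s) \neq i(I_t)$) is a bookkeeping detail the paper leaves implicit, but it does not alter the argument.
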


This corollary gives us a sufficient condition for $I_s \transition{2} I_t$, but in fact, the following lemma shows that it is also a necessary condition.
\begin{lemma}
\label{lma:necessarycondition}
Let $I_s$ and $I_t$ be any typical independent sets having a Free Cluster respectively and $i(I_s) \neq i(I_t)$ holds. Both of the following two conditions hold if $I_s \transition{2} I_t$.
\begin{itemize}
\item For some $k_1 \in \{0, 1, 2\}$, $|N_{i(I_s)}| \geq k_1$ and $|U^B| \geq |I_s|+|N_{i(I_s)}| + |N_0| - k_1$,
\item For some $k_2 \in \{0, 1, 2\}$, $|N_{i(I_t)}| \geq k_2$ and $|U^B| \geq |I_t|+|N_{i(I_t)}| + |N_0| - k_2$.
\end{itemize}
\end{lemma}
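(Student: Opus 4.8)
The plan is to prove the contrapositive of each condition separately. Since $\transition{2}$ is symmetric and $I_s,I_t$ play symmetric roles, it suffices to establish the first condition (the one involving $i(I_s)$); the second follows verbatim after exchanging $I_s$ and $I_t$. Throughout I may normalize $I_s=I_s^{\ast}$ and reason only about distributions, because Lemma~\ref{lma:innercluster} and Corollary~\ref{corol:samedistribution} let me reach any fixed distribution for free. Writing $e:=|U^B|-|I_s|$ for the (instance-fixed) number of empty vertices of $U^B$ at a typical set, the first condition is equivalent to the single inequality $e \ge |N_{i(I_s)}|+|N_0|-\min(2,|N_{i(I_s)}|)$, obtained by choosing the most favorable $k_1=\min(2,|N_{i(I_s)}|)$. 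So I assume this inequality fails and aim to derive $I_s \not\transition{2} I_t$.

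The structural facts I would use are: $V^A$ is a clique, so every independent set holds at most one vertex of $V^A$, hence at most one token is ever in transit there; two vertices lying in the $U^{\C}$-parts of distinct clusters are at distance $3$, so the \emph{distribution} of a typical set can change only by routing a token $U^{\C}_a\to V^A\to U^{\C}_b$ with $a\ne b$; and for such a routing the transit vertex of $V^A$ must be unblocked, which by Lemma~\ref{lma:block} can happen only inside a currently Free cluster. The first genuinely useful observation is that, under the failing inequality, $C_{i(I_t)}$ \emph{cannot} be Free in $I_s$: otherwise $C_{i(I_s)}$ and $C_{i(I_t)}$ are two distinct Free clusters, their neighborhoods $N_{i(I_s)},N_{i(I_t)}\subseteq U^B$ are disjoint and both empty in $I_s=I_s^{\ast}$, giving $e\ge |N_{i(I_s)}|+|N_{i(I_t)}|\ge |N_{i(I_s)}|+|N_0|-\min(2,|N_{i(I_s)}|)$, contradicting the hypothesis (here I use $|N_{i(I_t)}|\ge |N_0|$ from the sorted ordering). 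Thus $C_{i(I_t)}$ is non-Free in $I_s$ but Free in $I_t$, so any transformation realizing $I_s\transition{2}I_t$ must at some point \emph{make $C_{i(I_t)}$ Free}.

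The core of the proof is therefore to show that, under the failing inequality, no transformation starting from $I_s$ can make $C_{i(I_t)}$ Free; this is exactly the converse of the freeing mechanism behind Lemma~\ref{lma:sufficientcondition} and Corollary~\ref{corol:sufficient}. A cluster becomes Free only as the \emph{source} of token moves, so the step that first empties $N_{i(I_t)}$ must route its last tokens out through the apex $v^{\min}$ of some Free cluster used as a routing hub. At the instant $N_{i(I_t)}$ is emptied, the hub's neighborhood is still (essentially) empty, and disjointness forces a simultaneous-emptiness demand of order $|N_{i(I_t)}|$ plus the hub size; since every hub has $|N|\ge|N_0|$ and by the ordering the cheapest possible hub is $C_0$, the demand is minimized at $|N_{i(I_t)}|+|N_0|$, shaved only by the hub-switching device (once $C_{i(I_t)}$ becomes Pseudo-free one may route its last one or two tokens through $v^{\min}_{i(I_t)}$ itself, absorbing at most $\min(2,|N_{i(I_s)}|)$ empties). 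Matching this against the failing inequality yields the contradiction, so $C_{i(I_t)}$ stays non-Free along every transformation and $I_t$ is unreachable.

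The main obstacle is making this last step rigorous as a \emph{global} invariant preserved by every legal $2$-jump, not merely by the tidy campaigns of the sufficiency proof. Two points need care. First, I must pin down the exact slack $\min(2,|N_{i(I_s)}|)$ by proving that hub-switching can absorb at most two tokens before the routing stalls — this is precisely where the clique structure of $V^A$ (at most one token in transit, and an apex becomes blocked as soon as one neighbor is occupied) does the real work. Second, I must rule out multi-hop bootstrapping, in which one first frees an auxiliary cluster $C_c$ and then reuses $C_c$ as a cheaper hub; the ordering $|N_0|\le|N_c|$ should make every auxiliary freeing at least as expensive as the direct one, but proving monotonicity over arbitrary interleavings of intra-cluster moves, $V^A$-excursions, and transient Pseudo-free clusters is the technical crux. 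I expect to control these by tracking the potential $e-|N_{i(J)}|$ (empty vertices outside the current smallest Free neighborhood) across reachable typical sets $J$ and showing it can never fall below $|N_0|-\min(2,|N_{i(I_s)}|)$ under the failing hypothesis, which is the converse bound making Corollary~\ref{corol:sufficient} tight.
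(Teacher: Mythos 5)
Your reductions are fine --- the symmetry between $I_s$ and $I_t$, normalizing to $I_s = I_s^{\ast}$, collapsing the disjunction over $k_1$ into the single inequality $e \ge |N_{i(I_s)}| + |N_0| - \min(2,|N_{i(I_s)}|)$ with $e = |U^B| - |I_s|$, and the observation that $C_{i(I_t)}$ cannot already be Free for $I_s$ (disjointness of $N_{i(I_s)}$ and $N_{i(I_t)}$ inside $U^B \setminus I_s^{\ast}$ plus $|N_{i(I_t)}| \ge |N_0|$) all match the paper's setup or are harmless strengthenings of it. But the heart of the lemma --- that no legal sequence starting from $I_s$ can ever make a second cluster Free --- is precisely the part you defer to a ``technical crux,'' and the invariant you propose to close it is not even well-posed: under the failing hypothesis the quantity $e - |N_{i(J)}|$ is \emph{already} below your floor $|N_0| - \min(2,|N_{i(I_s)}|)$ at $J = I_s$, so showing ``it can never fall below'' that floor is vacuous in the wrong direction and cannot drive a contradiction. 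Your worries about hub-switching and multi-hop bootstrapping are legitimate, but the proposal contains no mechanism that actually rules them out.

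The paper sidesteps all of this with a single counting bound that your sketch never finds. Write $i = i(I_s)$; the cases $|N_i| \le 1$ are dispatched directly (in your normalization they are immediate: $C_i$ Free gives $e \ge |N_i| \ge |N_0|$, contradicting $e < |N_0|$). For $|N_i| \ge 2$, failing the condition at $k_1 = 2$ yields, for \emph{every} reachable typical $I$ (same cardinality as $I_s$) for which $C_i$ is Free, $|U^B \setminus (I \cup N_i)| < |N_0| - 2$; since $N_j \subseteq U^B \setminus N_i$ and $|N_j| \ge |N_0|$, every other cluster satisfies $|N_j \cap I| > 2$. So while $C_i$ is Free, every other cluster holds at least three tokens in its minimum neighborhood: it is Bound, and stays Bound (occupancy at least $2$) even after one token departs. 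This kills bootstrapping outright --- no auxiliary cluster can become Free, or even Pseudo-free, while the hub $C_i$ is Free, so a cheaper hub never exists and no monotonicity over interleavings has to be proved. Taking $h$ maximal with $C_i$ Free for $I_h$, the step to $I_{h+1}$ moves a token into $N_i$, leaving only $C_i$ Pseudo-free and all other clusters Bound; by the Bound-cluster property (the mechanism behind Lemma~\ref{lma:notfree}) no cluster can become Free in the remainder of the sequence, contradicting that $C_{i(I_t)}$ is Free for $I_t$ (which itself cannot co-occur with $C_i$ Free, by the same $|N_j \cap I| > 2$ bound applied at $I_t$). The missing idea in your proposal is exactly this uniform ``three tokens in every other $N_j$'' estimate: once it is in hand, the configuration freezes the moment $C_i$ stops being Free, and no potential function is needed.
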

\begin{proof}
We can assume, without loss of generality, $I_s = I^{\ast}_t$, $I_t = I^{\ast}_t$ by Lemma~\ref{lma:innercluster}.
For contradiction, assume that the conditions of the lemma are not satisfied.
By symmetry, without loss of generality, we assume that $I_s$ does not satisfy the condition. 
Also, we denote $i = i(I_s)$ for short.
When $|N_i| = 0$, $C_i$ is Free for any typical independent set $I$, which contradicts to $i(I_s) \neq i(I_t)$.
When $|N_i| = 1$,  by assumption, $|U^B| < |I_s|+|N_i| + |N_0| - 1$ holds.
It follows from $|N_0| \leq |N_i| = 1$ (by definition) that $|U^B \setminus I_s| = |U^B| - |I_s| < 1$ holds.
Since $C_i$ is Free and satisfies $N_i \subseteq U^B \setminus I_s$, $|N_i| \leq |U^B \setminus I_s| < 1$ holds, which is a contradiction.

We consider the case of $|N_i| \geq 2$.
Let $I$ be any typical independent set such that $I_s \transition{2} I$ holds and $C_i$ is Free for $I$.
Since $C_i$ does not satisfy the condition for $k_1=2$, $|U^B| < |I_s|+|N_i| + |N_0| - 2 = |I| + |N_i| + |N_0| - 2$ holds.
Since $C_i$ is Free, $I \cap N_i = \emptyset$ and $|I \cup N_i| = |I| + |N_i|$ hold, which derives $|U^B \setminus (I \cup N_i)| < |N_0| - 2$.
For any cluster $C_j\ (j \ne i)$, $|N_j \setminus I| < |N_0| - 2$ holds from $N_j \subseteq U^{B} \setminus N_i$.

\begin{align*}
& |N_j| = |N_j \cap I| + |N_j \setminus I| \geq |N_0|\ (\because\ |N_j|\ge |N_0|) \\
&\Leftrightarrow |N_j \cap I| \geq |N_0| - |N_j \setminus I| \\
&\Leftrightarrow |N_j \cap I| > 2 
\end{align*}
On the other hand, from $I_s \transition{2} I_t$, there exists a reconfiguration sequence $I_0(=I_s), I_1, \ldots, I_\ell (= I_t)$.
Let $h$ be the maximum index such that $C_i$ is Free for $I_h$.
Since $C_i$ is not Free for $I_t$, $h < \ell$ hols.
Also, since $C_i$ is not Free for $I_{h+1}$, a token moves to a vertex in $N_i$ in the transition from $I_h$ to $I_{h+1}$.
It follows from the above inequality that $I_h \cap C_j > 2$ holds for any cluster $C_j\ (j \ne i)$ and thus $I_{h+1} \cap C_j \geq 2$ holds.
Similarly, $C_i$ is Pseudo-Free for $I_{h+1}$.
These imply only $C_i$ is Pseudo-Free and all other clusters are Bounded for $I_{h+1}$.
From the definition of $h$, $C_i$ is not Free for $I_{h'}$ if $h'> h$.  This requires one cluster other than $C_i$ need to become Free in reconfiguration sequence $I_{h+1}, I_{h+2}, \ldots, I_\ell (= I_t)$ without making $C_i$ Free.
However, all clusters other than $C_i$ are Bounded for $I_{h+1}$, so such a reconfiguration sequence is impossible because of the properties of Bounded Clusters.
\qed
\end{proof}

\subsection{Putting All Together: the Algorithm}
\label{lma:putting_all}

In summary, we show a polynomial-time decision algorithm for {$2$-\Isreconf} on split graphs.  We assume, without loss of generality, that given $I_s$ and $I_t$ are typical independent sets.

For a given split graph $G = (V^A \cup U^B, E^A \cup E^B)$, we first obtain clusters $C_0,\ldots,C_{m-1}$ by deleting all edges in $E^A$ (or edges in the complete subgraph). 
Then, we obtain the following elements for each cluster $C_i$.
\begin{itemize}
    \item $v^{\min}_i$: the vertex with the minimum degree in $V^{\C}_i$.
    \item $|N_i|$: the degree of $v^{\min}_i$ in $C_i$.
    \item $|I_s \cap U^{\C}_i|$ and $|I_t \cap U^{\C}_i|$ for each $i\ (0 \le i \le m-1)$: the numbers of tokes in cluster $C_i$ for $I_s$ and $I_t$ respectively.
\end{itemize}

We then classify the clusters for $I_s$ and $I_t$ into Free Clusters, Pseudo-Free Clusters, and Bounded Clusters: $C_i$ is Free for $I (\in \{I_s, I_t\})$ if $|U^{\C}_i|-|I \cap U^{\C}_i| \ge |N_i|$, Pseudo-Free if $|U^{\C}_i|-|I \cap U^{\C}_i| = |N_i|-1$, or Bounded otherwise.
Similarly, we determine whether $C_i$ is full or not for $I$.

After the classification, we check whether $I_s \transition{2} I_t$ or not.  If all the clusters are Bounded, or only a single cluster is Pseudo-Free and all other clusters are full, then we can determine, following Lemma~\ref{lma:notfree}, whether $I_s \transition{2} I_t$ or not by checking whether they have the same distribution or not.
If $I_s$ and $I_t$ have a common Free Cluster, then we can determine,  following Lemma~\ref{lma:commonfree}, that $I_s \transition{2} I_t$ holds.
Finally, if no cluster is Free for at least one of $I_s$ and $I_t$, then we can determine, following Lemma~\ref{lma:sufficientcondition} and Lemma~\ref{lma:necessarycondition}, whether $I_s \transition{2} I_t$ or not by checking whether both the following condition are satisfied or not. 
\begin{itemize}
    \item $N_i$=1 and $|U^B| \geq |I_s|+|N_i| + |N_0| -1$, or $|N_i| > 1$ and $|U^B| \geq |I_s|+|N_i| + |N_0| -2$.
    \item $N_{i'}=1$ and $|U^B| \geq |I_t|+|N_{i'}| + |N_0| -1$, or $|N_{i'}| > 1$ and $|U^B| \geq |I_t|+|N_{i'}| + |N_0| -2$.
\end{itemize}
It is obvious that the procedure described above can be executed in polynomial time.

\section{NP-completeness of {\ShortestunderkIsreconf}}
\label{sec:NP-completeness}
In this section, we prove Theorem~\ref{theorem3}.
\begin{rtheorem}{theorem3}
Let $k \geq 3$ be any integer. 
Then, there exists a graph class $\mathcal{G}_k$ such that $\mathcal{G}_k$ is a subclass of chordal graphs of diameter at most $2k + 1$ and \ShortestIsreconf\ under the $k$-Jump rule is NP-complete for $\mathcal{G}_k$.
\end{rtheorem}

To prove the theorem, we give a polynomial-time reduction from the E3-SAT problem.
The E3-SAT problem is a special case of the SAT problem, where each clause contains exactly three literals. 
We reduce any instance $\Phi$ of E3-SAT to the 
instance $\Phi' = (G,I_s,I_t)$ of {\ShortestunderkIsreconf} whose shortest reconfiguration sequence has a length at most $2(m+n)$ 
if and only if $\Phi$ is satisfiable, where $m$ and $n$ is the number of clauses and variables in $\Phi$.

\subsection{Gadget Construction}
\label{sec:gadget_construct}

Consider any instance $\Phi$ of E3-SAT consisting of $m$ clauses $c_0, c_1, \dots,c_{m-1}$ and $n$ variables $x_0, x_1, \dots, x_{n-1}$. 
We construct the \emph{clause gadget} $C_i$ for each clause $c_i$ in $\Phi$, 
and construct the \emph{variable gadget} $L_j$ for each variable $x_j$ in $\Phi$. 

\paragraph*{Clause Gadget}
We define the clause gadget $C_i$.
The gadget $C_i$ under the $k$-Jump rule is defined as follows (see Fig. \ref{fig3}):
\begin{itemize}
\item Create a path $P=(v_{0},v_{1},\dots,v_{2k-1},v_{2k})$, and define aliases $s$, $k_1$, and $t$ as $s=v_0$, $k_1=v_k$, and $t=v_{2k}$. 
\item Add two vertices $k_0$ and $k_2$, and add four edges $\{k_0, v_{k-1}\}$, $\{k_0, v_{k+1}\}$, $\{k_2, v_{k-1}\}$ and $\{k_2, v_{k+1}\}$.
\end{itemize}
For any vertex $v$ in $C_i$, $v^i$ represents the vertex $v$ in the clause gadget $C_i$. 
Let $K=\bigcup_{i=0}^{m-1}\{k^i_0, k^i_1, k^i_2\}$. We further augment some edges crossing different clause gadgets.
\begin{itemize}
\item Connect any two vertices in $K$, i.e., $K$ forms a clique.
\end{itemize}

\begin{figure}[t]
\centering
\includegraphics[scale=0.5]{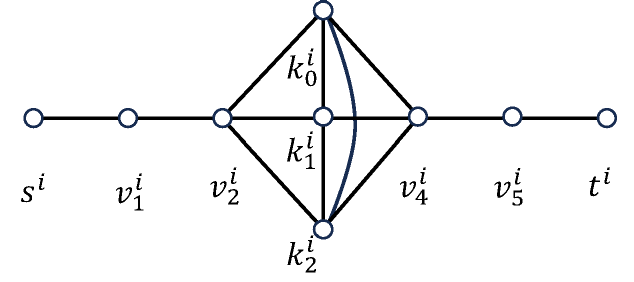}
\caption{Example of clause gadget $C_i$ when $k=3$. Note that edges of $\{k_0,k_1\}$, $\{k_1,k_2\}$ and $\{k_2,k_0\}$ are added in the last step of making $K$ a clique.} 
\label{fig3}
\end{figure}

\paragraph*{Variable Gadget}
The variable gadget $L_j$ under the $k$-Jump rule is constructed as follows (see also Fig. ~\ref{fig4}):
\begin{figure}[t]
\centering
\includegraphics[scale=0.5]{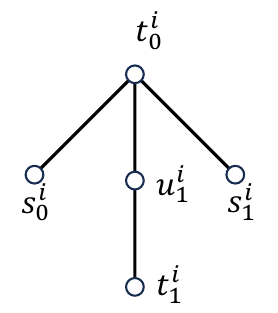}
\caption{Example of variable gadget $L_j$ when $k=3$.}
\label{fig4}
\end{figure}
\begin{itemize}
\item Create a path $P= (u_{0},u_{1},\dots,u_{k-2},u_{k-1})$. We give aliases $t_0$ and $t_1$ as $t_0=u_{0}, t_1=u_{k-1}$. 
\item Add two vertices $s_0, s_1$, and add two edges $\{s_0, t_0\}$ and $\{s_1, t_0\}$.
\end{itemize}
Similarly to the clause gadgets, for any vertex $v$ in $L_j$, $v^j$ represents the vertex $v$ in $L_j$.

\paragraph*{Whole Construction}
We obtain the whole graph $G$ by adding the edges connecting clause gadgets and variable gadgets, as follows:
\begin{itemize}
\item We perform the following process for each clause $c_i = (a \vee b \vee c)$. Let $L_{a}$ (resp. $L_{b}$, $L_{c}$) be the variable gadgets corresponding to $a$ (resp. $b$, $c$) and $\rho_i:\{a,b,c\}\rightarrow\{k^i_0, k^i_1, k^i_2\}$ be the function such that $\rho_i(a)=k^i_0$, $\rho_i(b)=k^i_1$, and $\rho_i(c)=k^i_2$. For all $\alpha\in\{a,b,c\}$.
If $\alpha$ is a positive literal, we add two edges $e_0=\{s^\alpha_0, \rho_i(\alpha)\}$ and $e_1=\{t^\alpha_0, \rho_i(\alpha)\}$.
Otherwise, we add two edges $e_0=\{s^\alpha_1, \rho_i(\alpha)\}$ and $e_1=\{t^\alpha_0, \rho_i(\alpha)\}$.
\end{itemize}
We finish the construction of $\Phi'$ by defining the initial independent set $I_s$ and the target independent set $I_t$ as follows:
\begin{itemize}
\item $I_s=\bigcup_{i=0}^{m-1}{v^i_0} \cup \bigcup_{j=0}^{n-1}{(s^j_0 \cup s^j_1)}$; and 
\item $I_t=\bigcup_{i=0}^{m-1}{v^i_{2k}} \cup \bigcup_{j=0}^{n-1}{(t^j_0 \cup t^j_1)}$.
\end{itemize}


\subsection{Proof of Theorem~\ref{theorem3}}

We define $\mathcal{G}_k$ as the family of the graphs constructed by the reduction from any E3-SAT instance explained above. First, we explain that any graph that is included in $\mathcal{G}_k$ is a chordal graph. If the induced graph by $v$ and $N(v)$ is clique, $v$ is called a \emph{simplicial vertex}. The \emph{perfect elimination ordering (PEO)} of $G$ is a vertex sequence $\pi=(p_0,\dots,p_{n-1})$ such that all $p_i$ are simplicial vertex in $G[V_i]$, where $G[V_i]$ is the induced graph by vertex set $\{p_i,p_{i+1},\dots,p_{n-1}\}$.
It is known that a graph $G$ has a PEO if and only if $G$ is a chordal graph~\cite{fulkerson1965incidence}.
The key technical lemmas are presented below:
\begin{lemma}
\label{lemma10}
For any $H \in \mathcal{G}_k$, $H$ is a chordal graph.
\end{lemma}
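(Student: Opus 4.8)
The plan is to exhibit a perfect elimination ordering (PEO) of $H$, since a graph is chordal if and only if it admits a PEO. Recall that a vertex $v$ is \emph{simplicial} in a graph if its neighborhood induces a clique, and $\pi = (v_0, v_1, \dots, v_{N-1})$ is a PEO if each $v_i$ is simplicial in the subgraph induced by $\{v_i, v_{i+1}, \dots, v_{N-1}\}$. The goal is to construct an explicit ordering that peels off the low-degree ``leaf-like'' vertices of the gadgets first, leaving the clique $K$ for last.

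First I would order the vertices of each clause gadget $C_i$. The path $P = v_0, \dots, v_{2k}$ has its two endpoints $s = v_0$ and $t = v_{2k}$ as degree-one (hence simplicial) vertices; after removing them the new endpoints become simplicial, and so on. The delicate points are the middle vertices $v_{k-1}$, $k_1 = v_k$, $v_{k+1}$, together with $k_0$ and $k_2$, because $k_0, k_1, k_2$ all belong to the clique $K$ and $k_0, k_2$ are each adjacent to both $v_{k-1}$ and $v_{k+1}$. I would eliminate the path interiors from both ends inward so that when I reach $v_{k-1}$ its surviving neighbors are exactly $\{v_k, k_0, k_2\}$; I must check these form a clique, which requires the edges $\{k_0, k_2\}$, $\{k_0, v_k\}$, $\{k_2, v_k\}$ to be present. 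The edge $\{k_0, k_2\}$ comes from $K$ being a clique, but $\{k_0, v_k\}$ and $\{k_2, v_k\}$ are \emph{not} in the construction, so a naive ``shortest-path-first'' order fails here; this is the main obstacle and dictates the ordering.

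To resolve it, I would peel the two branch vertices $k_0, k_2$ only \emph{after} both $v_{k-1}$ and $v_{k+1}$ are gone, or alternatively remove $v_{k-1}$ and $v_{k+1}$ while their relevant neighborhoods are already cliques. Concretely, I would eliminate the path from $s$ up to $v_{k-2}$ and from $t$ down to $v_{k+2}$ first, leaving $\{v_{k-1}, v_k, v_{k+1}, k_0, k_2\}$ plus $K$. At this stage $v_{k-1}$ is adjacent to $v_k, k_0, k_2$ (and nothing else surviving), so I still need $v_k$ adjacent to $k_0$ and $k_2$, which is false. Hence the correct move is to eliminate $v_k = k_1$ early as well: once $v_{k-2}, v_{k+2}$ are removed, $v_{k-1}$'s surviving neighbors are $\{v_k, k_0, k_2\}$ and $v_{k+1}$'s are $\{v_k, k_0, k_2\}$, so I instead remove $v_{k-1}$ and $v_{k+1}$ after verifying the set $\{k_0, k_1, k_2\}$ is a clique (true, from $K$) and $v_k$ is adjacent to $k_0, k_2$ --- which forces me to recheck the gadget edges. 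The clean fix is to order so that $k_0$ and $k_2$ are removed together with $v_{k-1}, v_{k+1}$ as a single mutually-adjacent block reducing to the clique $K$, matching the ordering sketched in the commented-out proof (path interiors first, then the variable-gadget leaves $s_0, s_1$, the $t_0$ attachment points, and finally the vertices of $K$).

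Next I would handle each variable gadget $L_i$ analogously: $s_0, s_1$ have degree two (adjacent to $t_0$ and to one clique vertex $\rho_i(\cdot)$ each), and the short path $u_0, \dots, u_{k-1}$ peels from the $t_1 = u_{k-1}$ end toward $t_0 = u_0$. I must confirm that when $s_0$ (or $s_1$) is eliminated, its surviving neighbors $\{t_0, \rho_i(\alpha)\}$ induce a clique, i.e. $t_0$ is adjacent to $\rho_i(\alpha)$ --- which holds precisely because the construction adds both $e_0$ and $e_1 = \{t_0^\alpha, \rho_i(\alpha)\}$. Finally, the clique $K$ is eliminated last in arbitrary order, each vertex being trivially simplicial within a clique. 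Assembling these per-gadget orderings --- all clause path interiors, then the branch/center vertices reducing cleanly to $K$, then variable path interiors and the $s_0, s_1, t_0$ vertices, then $K$ --- yields a global PEO of $H$, establishing that $H$ is chordal. The only real care needed throughout is the bookkeeping at the clause-gadget center, ensuring the attachment edges to $K$ make each eliminated neighborhood a clique.
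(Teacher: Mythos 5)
There is a genuine gap, and it stems from a misreading of the construction. You claim that the edges $\{k_0, v_k\}$ and $\{k_2, v_k\}$ ``are \emph{not} in the construction,'' and you treat this as the main obstacle. But the construction sets the alias $k_1 = v_k$, and $k_1$ is a member of $K=\bigcup_{i=0}^{m-1}\{k^i_0,k^i_1,k^i_2\}$, which is made a clique; the paper's figure caption even states explicitly that the edges $(k_0,k_1)$, $(k_1,k_2)$, $(k_2,k_0)$ are added when $K$ is completed. So $v_k$ \emph{is} adjacent to both $k_0$ and $k_2$, the obstacle you identify is illusory, and the naive path-inward ordering works exactly as the paper does it: eliminate $v^i_0,\dots,v^i_{k-1}$ and $v^i_{2k},\dots,v^i_{k+1}$, at which point each of $v_{k-1}$ and $v_{k+1}$ has surviving neighborhood $\{k_0,k_1,k_2\}\subseteq K$, a clique. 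Note that your reading, if it were correct, would make the lemma false outright: without the edges $\{v_k,k_0\}$ and $\{v_k,k_2\}$, the cycle $v_{k-1},v_k,v_{k+1},k_0$ would be an induced $C_4$, so no elimination ordering could exist. This is a useful sanity check you skipped.

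Your attempted workaround is also incorrect as stated. Removing ``$k_0$ and $k_2$ together with $v_{k-1},v_{k+1}$ as a single mutually-adjacent block'' fails because that set is not a clique ($v_{k-1}\not\sim v_{k+1}$), and more importantly $k_0,k_2$ cannot be simplicial at that stage: each is adjacent both to path vertices like $v_{k-1}$ and to vertices of other gadgets (other clique vertices $k^{i'}_j$ and variable-gadget vertices $s^\alpha_0,s^\alpha_1,t^\alpha_0$), which are pairwise non-adjacent to $v_{k-1}$. All vertices of $K$, including $k_0,k_1,k_2$ of every clause gadget, must be eliminated last, after every attachment outside $K$ is gone --- which is precisely the paper's order (clause paths inward from both ends, then variable paths $u_{k-1},\dots,u_1$, then $s^i_0$, $s^i_1$, $t^i_0$, then $K$). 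Your handling of the variable gadgets is correct and matches the paper --- the simpliciality of $s^i_0,s^i_1$ does rest on the paired edges $e_0,e_1$ to $\rho_i(\alpha)$ and $t^\alpha_0$, as you observe --- but the clause-gadget center is the crux of the proof, and there your argument does not go through.
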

\begin{proof}
To prove that $H$ is a chordal graph, we show that $H$ has a PEO.
The number of vertices in $H$ is $n(2k+3)+m(k+2)$ because $n$ vertex gadgets and $m$ clause gadgets exist in $H$.
We consider the following vertex order $\pi=(p_0,\dots,p_{m(2k+3)+n(k+2)})$.
\begin{enumerate}
\item For any $i$ ($0\leq i \leq m-1$) and $h$ ($0 \leq h \leq k-1$), $p_{ik+h} = v^i_h$.
\item For any $i$ ($0\leq i \leq m-1$) and $h$ ($0 \leq h \leq k-1$), $p_{mk+ik+j} = v^i_{2k-h}$.
\item For any $j$ ($0\leq j \leq n-1$) and $h$ ($0 \leq h \leq k-2$), $p_{2mk+jk+h} = u^j_{k-h-1}$.
\item For any $i$ ($0\leq j \leq n-1$), $p_{2mk+n(k-1)+j}=s^j_0$.
\item For any $i$ ($0\leq j \leq n-1$), $p_{2mk+nk+j}=s^j_1$.
\item For any $j$ ($0\leq j \leq n-1$), $p_{2mk+n(k+1)+j}=t^j_0$.
\item For any $i$ ($0\leq i \leq m$) and $h$($0\leq h \leq 2$), $p_{2km+n(k+2)+3i+h}=k^i_h$.
\end{enumerate}
Except for steps 5 and 6, each vertex $p_i$ either has an adjacent vertex set that is a subset of clique $K$ or has degree 1 in the graph $H[V_i]$. In steps 5 and 6, for any $0\leq j \leq n-1, 0 \leq h \leq 1$, $s^{j}_{h}$ has only a subset of clique $K$ and $t^{j}_{0}$ as adjacent vertices in the graph $H[V_i]$. The adjacent vertices of $s^{j}_{h}$ that are included in $K$ are also included in adjacent vertices in $t^{j}_{0}$. So, it is easy to check that each vertex $p_i$ is simplicial vertex in $H[V_i]$. That is, $\pi$ is a PEO of $H$.
It implies that $H$ is a chordal graph.
\end{proof}

\begin{lemma}
\label{lemma12}
Let $G$ be the graph that are constructed by the reduction from an E3-SAT instance $\Phi$. 
The length of the solution of {\ShortestIsreconf} under the $k$-Jump rule for  instance $(G,I_s,I_t)$ is at most $2(m+n)$ if and only if  $\Phi$ is satisfiable.
\end{lemma}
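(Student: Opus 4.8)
The plan is to prove both directions of the biconditional, using throughout that one move deletes one vertex and inserts another at graph-distance at most $k$ while preserving independence.

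\textbf{Sufficiency ($\Phi$ satisfiable $\Rightarrow$ length $\le 2(m+n)$).} Given a satisfying assignment $\phi$, I would build a reconfiguration of length exactly $2(m+n)$ in three phases. Phase~1 makes one \emph{opening} move in each variable gadget $L_j$: if $\phi(x_j)$ is true, move $s_0^j \to t_1^j$; if false, move $s_1^j \to t_1^j$ (both jumps have length $\dist(s^j,t_1^j)=k$). After these $n$ moves every $t_1^j$ is occupied, every $t_0^j$ is empty, and $s_0^j$ (resp.\ $s_1^j$) is empty exactly when $x_j$ is true (resp.\ false); hence for every literal $\alpha$ satisfied by $\phi$, both variable-neighbours of the clique vertex $\rho_i(\alpha)$, namely $t_0^\alpha$ and $s_0^\alpha$ (if $\alpha$ positive) or $s_1^\alpha$ (if negative), are simultaneously vacated. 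Phase~2 processes the clause gadgets one at a time: for $c_i$ choose a literal $\alpha$ satisfied by $\phi$ and route $v_0^i \to \rho_i(\alpha) \to v_{2k}^i$ in two jumps of length $k$. Doing this one clause at a time keeps at most one token in the clique $K$, while $\rho_i(\alpha)$ is unblocked because its clause-neighbours $v_{k-1}^i,v_{k+1}^i$ are empty and its variable-neighbours were vacated in Phase~1; this costs $2m$ moves. Phase~3 finishes each $L_j$ with its complementary move ($s_1^j\to t_0^j$ if true, $s_0^j\to t_0^j$ if false), costing $n$ moves. The total is $2(m+n)$ and the final configuration is $I_t$; I would include the routine independence checks that every destination is free at the moment it is used.

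\textbf{Necessity, lower bound.} I would first show that \emph{every} reconfiguration uses at least $2(m+n)$ moves. Call a move a \emph{fill} if its inserted vertex lies in $I_t\setminus I_s$. Since each of the $m+2n$ vertices of $I_t\setminus I_s$ is empty in $I_s$ and occupied in $I_t$, there are at least $m+2n$ fills. To produce $m$ further moves I charge one to each clause: the last move filling $v_{2k}^i$ has source $w_i$ with $\dist(w_i,v_{2k}^i)\le k$, and a distance computation inside $C_i$ gives $w_i\in\{v_k^i,\dots,v_{2k-1}^i\}\cup\{k_0^i,k_2^i\}$, so $w_i\notin I_s\cup I_t$. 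Thus $w_i$ is first occupied by some \emph{non-fill} move $A_i$; as the $w_i$ lie in pairwise vertex-disjoint gadgets, the $A_i$ are distinct from each other and from all fills. This yields at least $m$ non-fill moves, so at least $2(m+n)$ in total.

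\textbf{Necessity, tight case $\Rightarrow$ satisfiable.} When the length equals $2(m+n)$ both counts are tight: each target is filled exactly once and never emptied, and the only non-fill moves are the $A_i$. I would then establish the rigidity of an optimal solution. Each $v_0^i$ must be vacated, and a distance check shows no target of $I_t\setminus I_s$ lies within distance $k$ of $v_0^i$; hence the vacating move is a non-fill move and must be $A_i$ with source $v_0^i$. Combining $\dist(v_0^i,w_i)\le k$ with $\dist(w_i,v_{2k}^i)\le k$ forces $w_i\in\{k_0^i,k_1^i,k_2^i\}$, and checking that the only moves touching $C_i$ are $A_i,B_i$ shows the clause token travels $v_0^i\to\rho_i(\alpha_i)\to v_{2k}^i$ through a single clique vertex. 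Symmetrically each $L_j$ uses exactly its two fills, sourced at $s_0^j,s_1^j$, realising one of the pairings $\{s_0\to t_0,\, s_1\to t_1\}$ or $\{s_0\to t_1,\, s_1\to t_0\}$, from which I read off $\phi(x_j)$. Finally, while the clause-$i$ token occupies $\rho_i(\alpha_i)$ its two private variable-neighbours must be empty, and a case analysis on the two pairings shows this interval exists exactly when $\phi$ satisfies $\alpha_i$. Hence every clause has a literal satisfied by $\phi$, so $\Phi$ is satisfiable.

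\textbf{Main obstacle.} The hard part is the necessity direction, precisely because jumps ignore intermediate vertices, so one must rule out cross-gadget shortcuts and show no token profitably leaves its gadget. The charging argument is tailored to this: its two load-bearing distance facts are that no target of $I_t\setminus I_s$ lies within distance $k$ of any $v_0^i$, and that the filler of $v_{2k}^i$ must originate in the middle/right of $C_i$, which together force the transit vertex into the clique $K$. Verifying these facts from the gadget construction and confirming the two-moves-per-gadget rigidity in the tight case is where the real work lies; the extraction of a satisfying assignment is then a direct case analysis.
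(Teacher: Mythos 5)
Your sufficiency argument coincides with the paper's (its moves (M1)--(M3) are exactly your Phases 1--3), and your lower-bound bookkeeping is a legitimate variant of the paper's: where you count \emph{moves} (at least $m+2n$ fills plus $m$ charged non-fills $A_i$), the paper counts per-\emph{token} requirements (each clause token must move at least twice since every vertex of $I_t$ is at distance at least $k+1$ from $v^i_0$; each variable token at least once), which is somewhat lighter. Your clause-side rigidity in the tight case is also sound: the distance checks you invoke do force the vacating move of $v^i_0$ to be $A_i$ and force $w_i\in\{k^i_0,k^i_1,k^i_2\}$.

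The genuine gap is the word ``Symmetrically'' in the variable-gadget step. The claim that each $L_j$'s two fills are sourced at $s^j_0,s^j_1$ and realise one of the two pairings does \emph{not} follow from your two load-bearing distance facts, and the situation is not symmetric to the clause side: while $t^j_1$ can indeed only be filled from inside $L_j$, a cross-gadget fill $s^{j'}_\ell\to t^j_0$ has distance at most $3\le k$ (one edge from $s^{j'}_\ell$ to a clique vertex, one edge inside the clique $K$, one edge from a clique vertex to $t^j_0$, since the edges $\{t^\alpha_0,\rho_i(\alpha)\}$ are always present). Distance and counting alone therefore permit a nontrivial permutation of the leftover variable tokens among the vertices $t^{j'}_0$, and under such a permutation a gadget $L_j$ could have both $s^j_0$ and $s^j_1$ vacated while $t^j_0$ is still empty, i.e., be simultaneously positively and negatively open --- which destroys the extraction of a consistent assignment $\phi$. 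What is missing is an independence-plus-ordering argument, which is precisely the (S3) step of the paper's Lemma~15: since $t^j_0$ is adjacent to \emph{both} $s^j_0$ and $s^j_1$, any insertion at $t^j_0$ requires both tokens of $L_j$ to have moved earlier; taking the \emph{first} across-gadget move, minimality forces those two earlier moves to stay inside $L_j$, and (non-fills being exhausted by the $A_i$) their destinations are $t^j_0$ and $t^j_1$, so $t^j_0$ is already occupied and, targets being filled once and never emptied, the across-gadget move is impossible. With this argument inserted, your pairing rigidity and the ensuing open-interval case analysis go through and parallel the paper's Lemma~17.
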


We consider the proof of Lemma~\ref{lemma12}. First, we focus on the proof of the if part. \begin{lemma}
\label{lemma14}
Let $G$ be the graph constructed from an E3-SAT instance $\Phi$ by the reduction explained in Section~\ref{sec:gadget_construct}.
If the instance $\Phi$ is satisfiable, then there exists the reconfiguration sequence from the initial independent set $I_s$ to the target independent set $I_t$ such that the length of the sequence is at most~\footnote{Precisely, this is exactly $2(m + n)$. Since every token on a clause gadget has to jump twice or more ($2m$) and every token on a vertex gadget has to jump at least once ($2n$). So trivially no sequence with fewer moves is possible.} $2(m+n)$.
\end{lemma}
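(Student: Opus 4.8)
The plan is to exhibit, from any fixed satisfying assignment $\phi$ of $\Phi$, an explicit reconfiguration sequence from $I_s$ to $I_t$ of length exactly $2(m+n)$, split into three phases. In this sequence each of the $2n$ variable-gadget tokens moves exactly once and each of the $m$ clause-gadget tokens moves exactly twice, so the move budget is tight and the entire difficulty lies in scheduling the moves so that every target vertex is unblocked (i.e.\ $N_G(v)\cap(I\setminus\{u\})=\emptyset$) at the instant its token arrives. Throughout I use the semantic idea that the variable-gadget moves encode $\phi$, and that a clause-gadget clique vertex is passable precisely when the corresponding literal is satisfied.

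First I would record the geometric facts that make the move counts achievable. In a clause gadget $C_i$, each of $k^i_0,k^i_1,k^i_2$ lies within distance $k$ of both $s=v^i_0$ and $t=v^i_{2k}$: for $k^i_1=v^i_k$ this is immediate, and for $k^i_0,k^i_2$ one uses the in-gadget detour $v^i_0,\dots,v^i_{k-1},k^i_0,v^i_{k+1},\dots,v^i_{2k}$, which has length $k$ on each side. Hence, for any literal $\alpha$ of $c_i$, the clause token can travel $v^i_0\to\rho_i(\alpha)\to v^i_{2k}$ in two $k$-jumps. In a variable gadget $L_j$, one checks $\dist_G(s^j_0,t^j_1)=\dist_G(s^j_1,t^j_1)=k$ (along $u^j_0,\dots,u^j_{k-1}$) and $\dist_G(s^j_0,t^j_0)=\dist_G(s^j_1,t^j_0)=1$, so each variable token reaches $t^j_1$ or $t^j_0$ in a single jump.

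Next I would define the encoding. For a variable $x_j$ with $\phi(x_j)=\mathrm{true}$, the Phase-1 (early) move sends $s^j_0\to t^j_1$ and the Phase-3 (late) move sends $s^j_1\to t^j_0$; for $\phi(x_j)=\mathrm{false}$ the roles of $s^j_0$ and $s^j_1$ are swapped. The invariant I would prove is that \emph{between} Phases 1 and 3 the vertex $t^j_0$ is empty while exactly one of $s^j_0,s^j_1$ stays occupied, namely the one joined by the reduction edges to the clique vertices of the literals \emph{falsified} by $\phi$. Given the placement of the edges $\{s^\alpha_0,\rho_i(\alpha)\}$ (positive) / $\{s^\alpha_1,\rho_i(\alpha)\}$ (negative) and $\{t^\alpha_0,\rho_i(\alpha)\}$, this yields: for every literal $\alpha$ of every clause, all variable-gadget neighbors of $\rho_i(\alpha)$ are empty during Phase 2 if and only if $\alpha$ is satisfied by $\phi$. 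Then, in Phase 2, I would move the clause tokens one clause at a time: since $\Phi$ is satisfiable, each $c_i$ has a true literal $\alpha$, and I route its token $v^i_0\to\rho_i(\alpha)\to v^i_{2k}$. I would verify $\rho_i(\alpha)$ is unblocked at that moment — its clause-gadget neighbors $v^i_{k-1},v^i_{k+1}$ are empty, its variable-gadget neighbors are empty by the invariant, and the clique $K$ is empty because the previous clause token has already returned to its $t$-endpoint — and that the second jump vacates $\rho_i(\alpha)$ so the next clause can reuse $K$. The Phase-1 and Phase-3 moves are routine ($t^j_1$ has only the empty neighbor $u^j_{k-2}$, and at Phase 3 the neighbors of $t^j_0$, namely $u^j_1$, the other $s$-vertex, and the clique, are all empty). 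Totalling $n+2m+n=2(m+n)$ and checking the terminal set equals $I_t$ finishes the argument.

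The main obstacle I expect is the scheduling constraint from the single shared clique $K$: at most one clause token may sit in $K$ at any time, forcing the clause tokens to pass sequentially. I must argue this is compatible with each clause token moving only twice, which it is, because such a token occupies $K$ only during the single intermediate step between its two jumps, and it leaves $K$ before the next clause begins. The second delicate point is the timing that keeps $t^j_0$ empty throughout Phase 2 yet fills it in Phase 3; this is exactly the mechanism coupling the truth value of $x_j$ to the passability of its literals' clique vertices, and it is what must be pinned down precisely for the invariant above to hold.
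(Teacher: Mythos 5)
Your proposal is correct and follows essentially the same route as the paper's proof: the same three-phase schedule (variable tokens $s^j_0$ or $s^j_1$ to $t^j_1$ according to the satisfying assignment, then each clause token $v^i_0 \to \rho_i(\alpha) \to v^i_{2k}$ through a satisfied literal's clique vertex, then the remaining variable tokens to $t^j_0$), with the same count $n + 2m + n = 2(m+n)$. Your additional verifications (the distance-$k$ detours through $k^i_0,k^i_2$, the blocking invariant on the $s$-vertices, and the sequential one-token-at-a-time use of the clique $K$) are details the paper leaves implicit in its remark that $\rho_j(\alpha)$ is not blocked, and they check out.
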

\begin{proof}
Since $\Phi$ is satisfiable, there is at least one assignment to $x_0,...,x_{n-1}$ such that it satisfies $\Phi$.
We consider fixing one assignment to $x_0,...,x_{n-1}$ that satisfies $\Phi$.
We perform the transition from $I_s$ to $I_t$ as follows:
\begin{description}
\item[\textbf{(M1)}] For each variable $x_j$, if  true is assigned to $x_j$, then we move a token on $s^j_0$ to $t^j_1$, otherwise, move a token on $s^j_1$ to $t^j_1$.
\item[\textbf{(M2)}] Let $c_i=(a \vee b \vee c)$. Since E3-SAT is satisfiable, at least one literal in $c_i$ is true. Let $\alpha\in \{a,b,c\}$ be one literal which is true in assignment (if there are multiple candidates, select arbitrary one). Let $L_\alpha$ be a vertex gadget corresponding to literal $\alpha$. 
By movement \textbf{\textbf{(M1)}}, if $\alpha$ is a positive literal, a token on $s^\alpha_0$ moves to $t^\alpha_1$. If $\alpha$ is a negative literal, a token on $s^\alpha_1$ move to $t^\alpha_1$. Thus, we can move a token on $v^i_0$ to $v^i_{2k}$ via $\rho_i(\alpha)$ because $\rho_i(\alpha)$ is not blocked.
\item[\textbf{(M3)}] For all vertex gadgets $L_j$, move a token that did not move in movement \textbf{(M1)} on $s^j_0$ or $s^j_1$ to $t^j_0$.
\end{description}
Note that, the total number of moves in movement \textbf{(M1)} and \textbf{(M3)} is $n$ each and the total number of moves in movement \textbf{(M2)} is $2m$.
Therefore, the total number of moves for transition from $I_s$ to $I_t$ is $2(m+n)$.
\end{proof}

Next, we focus on the only-if part. We present an auxiliary lemma.
\begin{lemma}
\label{lemma15}
Let $G$ be the graph constructed from an E3-SAT instance $\Phi$ by the reduction explained in Section~\ref{sec:gadget_construct}.
If the shortest reconfiguration sequence from $I_s$ to $I_t$ is at most $2(m+n)$ under the $k$-Jump rule, the following three statements hold  in that shortest reconfiguration sequence.
\begin{description}
    \item[\textbf{(S1)}]For any $i$ ( $0\leq i \leq m-1$ ), the token on $v^i_0$ in $I_s$ has to move exactly twice. Also, for any $j$ ( $0\leq j \leq n-1$), it is required that the tokens on the $s^j_0$ and the $s^j_1$ in $I_s$ moves exactly once.
    \item[\textbf{(S2)}]For any $i$ ( $0\leq i \leq m-1$ ), the token on $v^i_0$ in $I_s$ is placed on $v^i_{2k}$ in $I_t$. 
    \item[\textbf{(S3)}]For any $j$ ( $0\leq i \leq n-1$ ), the tokens on $s^j_0$ and $s^j_1$ in $I_s$ are placed  on $t^j_0$ or the $t^j_1$ in $I_t$. 
\end{description}
\end{lemma}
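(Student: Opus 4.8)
The plan is to first establish a matching lower bound of $2(m+n)$ on the length of \emph{any} reconfiguration sequence from $I_s$ to $I_t$, and then exploit the tightness forced by the hypothesis. Tracking each token along a sequence of length $\ell$ yields a bijection $\phi$ from $I_s$ to $I_t$ together with, for each source $p$, a move count $M_p \ge \lceil \dist_G(p,\phi(p))/k\rceil$ for the token originating at $p$, since a token making $M$ jumps of length at most $k$ reaches distance at most $Mk$; hence $\ell = \sum_p M_p$. I would then record two distance facts about the reduction graph: (i) every target vertex lies at distance strictly greater than $k$ from each clause source $v^i_0$, because the vertices within distance $k$ of $v^i_0$ are exactly $v^i_0,\dots,v^i_k,k^i_0,k^i_2$, none of which is a target, so $M_{v^i_0}\ge 2$; and (ii) sources and targets are disjoint, so $M_{s^j_0},M_{s^j_1}\ge 1$. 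Summing over the $m$ clause sources and $2n$ variable sources gives $\ell \ge 2m+2n$. Combined with the assumption $\ell \le 2(m+n)$, the sequence has length exactly $2(m+n)$ and every inequality is tight; in particular each clause token moves exactly twice and each variable token exactly once, which is (S1).

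For (S2) I would use a uniqueness-of-feasible-source argument on the induced bijection $\phi$. By (S1), $\phi$ maps each source to a target reachable within its budget ($2k$ for clause sources, $k$ for variable sources). I would show that $v^i_{2k}$ is reachable within these budgets only from $v^i_0$: any other clause source $v^{i'}_0$ lies at distance at least $2k+1$ (it must traverse to the middle of gadget $i'$, cross the clique $K$, and descend the right half of gadget $i$), and any variable source lies at distance at least $k+1$ from $v^i_{2k}$. Since $\phi$ is a bijection and each $v^i_{2k}$ must be filled, this forces $\phi(v^i_0)=v^i_{2k}$ for every $i$, giving (S2); consequently the $m$ clause sources fill exactly the clause targets and the $2n$ variable sources fill the variable targets.

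For (S3) I would first localize the $t^j_1$ targets: since $t^j_1=u^j_{k-1}$ has degree one and its only neighbor $u^j_{k-2}$ is never occupied in a tight sequence, $t^j_1$ is always unblocked, and its distance from any source other than $s^j_0,s^j_1$ exceeds $k$ (an external source first reaches $t^j_0$ at distance at least $3$, then traverses the whole path, distance at least $k+2$). Hence each $t^j_1$ is filled by one of $s^j_0,s^j_1$, so exactly one source of gadget $j$ (the ``$t_1$-source'') goes to $t^j_1$ and the other (the ``free'' source) goes to some $t^{\sigma(j)}_0$, defining a permutation $\sigma$ of the variable gadgets. It remains to prove $\sigma=\mathrm{id}$. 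The crux is a temporal/blocking argument: when the free source of $j$ jumps onto $t^{\sigma(j)}_0$ with $\sigma(j)\neq j$, the vertex $t^{\sigma(j)}_0$ must be unblocked, so both $s^{\sigma(j)}_0$ and $s^{\sigma(j)}_1$ — in particular the free source of $\sigma(j)$ — must already have moved; thus the free-$j$ move strictly follows the free-$\sigma(j)$ move. A nontrivial cycle of $\sigma$ would force its free moves to strictly precede themselves, a contradiction, so $\sigma=\mathrm{id}$ and both sources of gadget $j$ end on $\{t^j_0,t^j_1\}$, which is (S3).

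I expect the temporal/blocking step in (S3) to be the main obstacle, because for $k\ge 3$ the distance budget alone does not exclude cross-gadget variable moves (indeed $\dist_G(s^j_0,t^{j'}_0)=3$ can hold); ruling them out genuinely requires the independence constraint (the adjacencies $s^{j'}_0,s^{j'}_1\sim t^{j'}_0$) together with the exactly-once budget from (S1), which yield the acyclicity of $\sigma$. The remaining work is routine bookkeeping of distances in the reduction graph and verifying that interior path vertices of the gadgets are never occupied in a tight sequence, so that the blocking analysis involves only the sources and the clique $K$.
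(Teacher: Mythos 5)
Your proposal is correct and follows essentially the same route as the paper's proof: (S1) and (S2) are obtained by the same per-token distance/budget counting (clause tokens need two jumps since every target is at distance at least $k+1$ from $v^i_0$, variable tokens need one, and the $2(m+n)$ bound forces tightness and the unique assignment $v^i_0 \mapsto v^i_{2k}$), and (S3) rests on the same blocking observation that a token can land on $t^{j'}_0$ only after both tokens on $s^{j'}_0$ and $s^{j'}_1$ have made their single moves. Your only departure is cosmetic: where the paper derives the contradiction by selecting the \emph{first} across-gadget move and noting its destination $t^{j'}_0$ is then already occupied by one of the two non-across-gadget tokens of $L_{j'}$, you package the identical precedence constraint as acyclicity of the permutation $\sigma$.
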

\begin{proof}
First, we show the statement \textbf{(S1)}.
For any $i$ ($0\leq i \leq m-1$), the distance between $v^i_0$ and any vertex in $I_t$ is at least $k+1$, so it is required that the token on $v^i_0$ moves at least twice.
Also, for any $j$ ($0\leq j \leq n-1$), $s^j_0$ and $s^j_1$ are not in $I_t$.
It implies that the tokens on $s^j_0$ and $s^j_1$ have to move at least once.
Since the length of the reconfiguration sequence from $I_s$ to $I_t$ is at most $2(m+n)$, the statement \textbf{(S1)} holds.

Next, we show the statement \textbf{(S2)}.
For any $0\leq i'\leq m-1$, the distance between $v^i_0$ and $v^{i'}_{2k}$ is $2k+1$ if $i \neq i'$ holds, or $2k$ otherwise.
Also, for any $0\leq j \leq m-1$, the distance from $s^j_0$ or $s^j_1$ to $v^{i}_{2k}$ is $k+1$ or $k+2$.
Therefore, from the condition of the statement \textbf{(S1)}, only a token on $v^i_0$ can move to $v^i_{2k}$.

Finally, we show the statement \textbf{(S3)}. In the reconfiguration sequence from $I_s$ to $I_t$, if a token on  $s^j_0$ (resp.~$s^j_1$) moves to the vertex that is not included in $L_j$, we call the token on $s^j_0$ (resp.~$s^j_1$) an \emph{across-gadget token}. Suppose for contradiction that there exists an across-gadget token though the reconfigure from $I_s$ to $I_t$ by at most $2(m+n)$ moves.
Let $s^{\ast}$ be the vertex in $L_j$ where the across-gadget token is placed. If there are multiple such vertices, select the vertex with the  token that moves first during reconstruction from $I_s$ to $I_t$ among the across-gadget tokens.
By the statement \textbf{(S1)}, the token on $s^{\ast}$ can only move once.
The set of vertices included in $I_t$ within distance at most $k$ from $s^{\ast}$ is $\bigcup_{0\leq j' \leq n-1}t^{j'}_0 \cup t^j_1$.
By the definition of the across-gadget token, the candidate destination for the token placed in $s^{\ast}$ is $\bigcup_{0\leq j' \leq n-1, j' \neq j}t^{j'}_0$.
Let $t^{j'}_0$ be a vertex to which the token on $s^{\ast}$ moves.
In order to move the token from $s^{\ast}$ to $t^{j'}_0$, we must move the token placed in $s^{j'}_0$ and $s^{j'}_1$ before moving the token $s^{\ast}$.
By the definition of  $s^{\ast}$, tokens on $s^{j'}_0$ and $s^{j'}_1$ are not across-gadget tokens, so they move to the vertices in $L_{j'}$.
However, they can only move to either $t^{j'}_0$ or $t^{j'}_1$, and at least one token is placed on $t^{j'}_0$.
It is a contradiction because the token placed on $s^{\ast}$ moves to $t^{j'}_0$.
\end{proof}

If no token is on $s^j_0$ or $t^j_0$, then we say that  the variable gadget $L_j$ is \emph{positively opened} for the variable $x_j$.
Otherwise, we say that  the variable gadget $L_j$ is \emph{positively closed} for the variable $x_j$.
Similarly, if there does not exists a token on $s^j_1$ and $t^j_0$, then we say that the variable gadget $L_j$ is \emph{negatively opened} for the variable $x_j$. Otherwise, we say that the variable gadget $L_j$ is \emph{negatively closed} for the variable $x_j$.
By the statement \textbf{(S3)} of lemma~\ref{lemma15}, tokens on $s^j_0$ and $s^j_1$ moves to $t^j_0$ or $t^j_1$ in one movement.
Moving a token from $s^j_0$ or $s^j_i$ to $t^j_0$ does not cause $L_j$ to become open.
Therefore, during the $2(m+n)$ token movements, $L_j$ is always either positively or negatively closed for the variable $x_j$. 

The main statement of the only-if part is the lemma below.
\begin{lemma}
\label{lemma17}
Let $G$ be the graph constructed from an E3-SAT instance $\Phi$ by the reduction explained in Section~\ref{sec:gadget_construct}.
If there exists a reconfiguration sequence from $I_s$ to $I_t$ under the $k$-Jump rule with length at most $2(m+n)$, then $\Phi$ is satisfiable.
\end{lemma}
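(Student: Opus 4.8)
The plan is to extract a satisfying truth assignment directly from a hypothetical reconfiguration sequence $\sigma$ of length at most $2(m+n)$, relying on the rigid structure imposed by statements (S1)--(S3) of Lemma~\ref{lemma15}. By those statements, in $\sigma$ each clause token (the one starting on $v^i_0$) moves exactly twice and terminates on $v^i_{2k}$, while in each variable gadget $L_i$ the two tokens on $s^i_0$ and $s^i_1$ each move exactly once and together occupy $\{t^i_0, t^i_1\}$ at the end. First I would read off the assignment from the behaviour of the variable gadgets: set $x_i$ to \emph{true} if $L_i$ is positively opened at some configuration of $\sigma$, and to \emph{false} otherwise. The two things to verify are that this assignment is well defined, and that it satisfies every clause.

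The clause-satisfaction step is a routing argument inside the clause gadget. Since $\dist_G(v^i_0, v^i_{2k}) = 2k$ (the endpoints have degree one on the path $P$, and the detours through $k^i_0, k^i_2$ also have length $2k$), a token making exactly two jumps of length at most $k$ from $v^i_0$ to $v^i_{2k}$ must pass through an intermediate vertex $w$ with $\dist_G(v^i_0,w) = \dist_G(w,v^i_{2k}) = k$. I would show by a short breadth-first argument that the only such vertices are $k^i_0, k^i_1, k^i_2$, i.e. $w = \rho_i(\alpha)$ for some literal $\alpha$ of $c_i$. At the moment the clause token sits on $w$ the configuration is an independent set, so every neighbour of $\rho_i(\alpha)$ is token-free; in particular its variable-gadget neighbours $\{s^\alpha_0, t^\alpha_0\}$ (if $\alpha$ is positive) or $\{s^\alpha_1, t^\alpha_0\}$ (if $\alpha$ is negative) are empty, which is exactly the condition that $L_\alpha$ is positively, respectively negatively, opened at that time. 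By the chosen assignment the literal $\alpha$ is then true, so $c_i$ is satisfied; as this holds for every clause, $\Phi$ is satisfiable.

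The heart of the proof---and the step I expect to be the main obstacle---is showing the assignment is consistent, i.e. that no gadget $L_i$ can be positively opened at one time and negatively opened at another. I would argue this from the one-move-each constraint (S1) together with (S3). Let token $A$ start on $s^i_0$ and token $B$ on $s^i_1$; by (S3) their destinations form a bijection onto $\{t^i_0, t^i_1\}$, and once a token reaches its destination it never moves again. If $L_i$ is ever positively opened, then $s^i_0$ and $t^i_0$ are simultaneously empty; emptiness of $s^i_0$ forces $A$ to have already moved, and were $A$ destined for $t^i_0$ this would contradict $t^i_0$ being empty, so $A$ must be destined for $t^i_1$. Symmetrically, $L_i$ being negatively opened forces $A$ to be destined for $t^i_0$. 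Since $A$ has a single destination, the two openings are mutually exclusive, which makes the assignment well defined.

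Finally I would flag the occupancy interaction one must be careful about: the mutual-exclusivity argument presumes that the only tokens ever reaching the vertices $s^i_j$ and $t^i_0$ are $A$ and $B$. This is precisely guaranteed by the absence of across-gadget tokens established in the proof of (S3), together with (S2), which confines every clause token to its own gadget. With that in hand, the moves of $A$ and $B$ are the only events affecting the occupancy of $t^i_0$, and the above case analysis goes through cleanly, completing the reduction.
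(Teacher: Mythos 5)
Your proposal is correct and follows essentially the same route as the paper's proof: extract the assignment from whether each $L_i$ is positively or negatively opened during the sequence, observe via (S1) that a clause token making exactly two jumps between $v^i_0$ and $v^i_{2k}$ (at distance $2k$) must pass through some $k^i_j$, and use independence at that moment to conclude the corresponding literal's gadget is opened with the matching polarity. Your explicit destination-bijection argument for why a gadget can never be both positively and negatively opened (and your remark that this relies on the absence of across-gadget tokens from (S2)/(S3)) is the same fact the paper disposes of in one line before the lemma (``moving a token from $s^i_0$ or $s^i_1$ to $t^i_0$ does not cause $L_i$ to become open''), just spelled out more carefully.
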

\begin{proof}
We consider fixed reconfiguration any sequence from $I_s$ to $I_t$ with length at most $2(m+n)$.
We check if each variable gadgets $L_j$ is either positively or negatively open for the variable $x_j$ during the reconfiguration from $I_s$ to $I_t$. 
If $L_j$ is positively open for the variable $x_j$, then we assign true to $x_j$, and  if $L_j$ is negatively open for the variable $x_j$, then we assign false to $x_j$.
If $L_j$ is always both positively and negatively closed for the variable $x_j$, then we assign false to $x_j$.
We prove that this assignment satisfies $\Phi$.

Consider any clause $c_i=a\vee b \vee c$.
Let  $\rho^{-1}_{i}$ be the function such that $\rho^{-1}_{i}(k^i_0)=a$, $\rho^{-1}_{i}(k^i_1)=b$, and $\rho^{-1}_{i}(k^i_2)=c$.
If the token on $v^i_0$ reaches $v^i_{2k}$ with two movements, then it must move to $k^i_0$, $k^i_1$, or $k^i_2$.
Let $k^i_h$ be the vertex that the token passed through to go to $v^i_{2k}$. 
If $\rho^{-1}_{i}(k^i_h)$ is a positive literal, then the variable gadget corresponding to $\rho^{-1}_{i}(k_h)$ is positively open before moving a token to $k^i_h$.
Thus, the clause $c_i$ satisfies because true is assigned to the variable corresponding to $\rho^{-1}_{i}(k^i_h)$. 
Similarly, if $\rho^{-1}_{i}(k^{i}_h)$ is a negative literal, then the variable gadget corresponding to $\rho_{i}^{-1}(k^{i}_h)$ is negatively opened for the corresponding variable.
Thus, the clause $c_i$ satisfies because false is assigned to the variable corresponding to $\rho^{-1}_{i}(k^{i}_h)$.
The above argument holds for other clauses, so the E3-SAT instance $\Phi$ is satisfiable.
\end{proof}
Lemma~\ref{lemma12} is trivially deduced from Lemma~\ref{lemma14} and Lemma~\ref{lemma17}.
Finally, we prove Theorem~\ref{theorem3}.
\begin{rtheorem}{theorem3}
Let $k \geq 3$ be any integer. 
Then, there exists a graph class $\mathcal{G}_k$ such that $\mathcal{G}_k$ is a subclass of chordal graphs of diameter at most $2k + 1$ and the \ShortestIsreconf\ under the $k$-Jump rule is NP-complete for $\mathcal{G}_k$.
\end{rtheorem}
\begin{proof}
    First, we show that the \ShortestIsreconf\ under the $k$-Jump rule in the chordal graph is included in NP.
    It is known that there always exists a transformation with a polynomial number of steps under the $D(G)$-Jump rule if $G$ is a connected even-hole-free graph~\cite{KaminskiMM12}.  In Theorem~\ref{thm:SimulateTJ}, we will prove that any transformation with a polynomial number of steps under the $D(G)$-Jump rule can be converted into a transformation with a polynomial number of steps under the $3$-Jump rule. Since chordal graphs form a subclass of even-hole-free graphs, the \ShortestIsreconf\ under the $k$-Jump rule in the chordal graph is included in $NP$.
    
    Lemma~\ref{lemma10} obviously implies that $\mathcal{G}_k$ is a subclass of chordal graphs.
    In addition, Lemma~\ref{lemma12} concludes that {$k$-\ShortestIsreconf} is NP-complete. It is easy to check that the diameter of the graph for the instance of {\ShortestIsreconf} under the $k$-Jump rule obtained by the reduction from the instance of E3-SAT is $2k+1$. 
\end{proof}

\section{Concluding Remarks}
\label{sec:conclusion}
In this paper, we proposed a new reconfiguration rule for the independent set reconfiguration problem, and investigated the relationship between the value of $k$ and the computational complexity of $k$-\Isreconf.
%
We conclude this paper with some open problems related to our new rule. 
\begin{itemize}
\item The complexity of $2$-\Isreconf\ for graph classes other than split graphs: 
A major class left as an open problem is chordal graphs, as shown in Table~\ref{table:ISReconf}. 
\smallskip

\item The complexity of \textsf{\ShortestIsreconf} under the $2$-Jump rule for split graphs: Is it solvable in polynomial time?  
\smallskip

\item The approximability of \ShortestunderkIsreconf\ $(k \geq 3)$ for even-hole-free graphs: 
Using the polynomial-time algorithm by Kami\'{n}ski et al.~\cite{KaminskiMM12}, we can solve \ShortestunderkIsreconf\ for connected even-hole-free graphs $G$ when $k = D(G)$.
Does it give any non-trivial approximation factor? 

\end{itemize}

\bibliography{reference.bib}

\end{document}